\newtheorem{lemma}{Lemma}
\newtheorem{theorem}{Theorem}
\newtheorem{property}{Property}
\newtheorem{remark}{Observation}
\newtheorem{corollary}{Corollary}
\theoremstyle{definition}
\newtheorem{definition}{Definition}
\newenvironment{customlemma}[1]
  {\innercustomthm}
  {\endinnercustomthm}
\newcommand{\HCtree}{HC-tree}
\newcommand{\DCcost}{\mathsf{DC\text{-}cost}}
\title{Hierarchical Clusterings of Unweighted Graphs}
\author{Svein H{\o}gemo, Cristophe Paul and Jan Arne Telle}
\begin{document}
\maketitle

\begin{abstract}
We study the complexity of finding an optimal hierarchical clustering of an unweighted similarity graph under the recently introduced Dasgupta objective function. 
We introduce a proof technique, called the normalization procedure, that takes any such clustering of a graph $G$ and iteratively improves it until a desired target clustering of $G$ is reached.
We use this technique to show both a negative and a positive complexity result. Firstly, we show that in general the problem is NP-complete. 
Secondly, we consider min-well-behaved graphs, which are graphs $H$ having the property that for any $k$ the graph $H^{(k)}$ being the join of $k$ copies of $H$ has an optimal hierarchical clustering that splits each copy of $H$ in the same optimal way. To optimally cluster such a graph $H^{(k)}$ we thus only need to optimally cluster the smaller graph $H$.
Co-bipartite graphs are min-well-behaved, but otherwise they seem to be scarce. We use the normalization procedure to show that also the cycle on 6 vertices is min-well-behaved.
\end{abstract}

\section{Introduction}


Clustering is an unsupervised machine learning technique and one of the most important problems in data-mining~\cite{Har75,CEF+06,KT09,HTF09}. Given a data set and a pairwise similarity measure, the task is to partition the data set into clusters so that similar data points belong to the same cluster.  
In a \emph{hierarchical clustering} the data set is recursively partitioned into smaller clusters, by means of a rooted binary tree whose leaves are in one-to-one correspondence with the data points. Hierarchical clustering emerged as a central task in the study of phylogenetic trees~\cite{SS63,Bun71}. Such a clustering is very general, capturing clustering structure at all levels of granularity, with a clustering into two parts given by the root of the tree, and finer clusterings given by lower levels of the tree. Algorithms for hierarchical clustering have been widely used for many years, but it was only recently that an objective function to measure their quality was formalized. In a STOC 2017 paper 
~\cite{Das16} Dasgupta introduced a natural objective function measuring the global cost of a hierarchical clustering. From now on, this function will be called the Dasgupta Clustering function - DC function. Several follow-ups to Dasgupta's work have appeared, we mention only a couple: in~\cite{CC17}, the authors improve the ratio of the approximation algorithm proposed by Dasgutpa;  in~\cite{CKM+19}, the authors revisit the DC function and propose some axioms that a "good" cost function should satisfy.

In this paper we investigate the complexity of finding the DC-optimal hierarchical clustering for {\em unweighted} similarity graphs. Thus, we assume that any pair of data points has been marked as either 'similar' or 'non-similar' and represent this information as an undirected, unweighted graph $G$ whose vertex set $V(G)$ is the set of data points and adjacencies represent similarity. We ask for an \HCtree{} (a Hierarchical Clustering tree), a rooted binary tree $T$ with leaves in one-to-one correspondence with $V(G)$, such that the DC-cost of $T$ - i.e. the sum over all edges $uv$ of $G$, of the number of leaves of the subtree rooted at the least common ancestor of $u$ and $v$ - is minimized. 
Dasgupta \cite{Das16} showed that the edge-weighted version of this problem, with weights representing degree of similarity, is NP-complete. In this paper we focus on unweighted graphs, the hardness of which was left open by Dasgupta~\cite{Das19}. 
Unweighted graphs naturally appear in this context, for example in the correlation clustering problem~\cite{BBC04}. It is also a common approach 
to transform a similarity matrix into a similarity graph by fixing a threshold value that determines whether two objects are similar or not (see~\cite{Har75} for example). We focus on dense similarity graphs.
Such graphs typically appear when there is a fixed threshold for similarity that is set to be very low, for example the existence of email correspondence within a single (small) organization, or existence of non-zero trade relations between countries. 
We show that the problem remains NP-complete, already for dense graphs. More precisely, by a reduction building on the one used in \cite{Das16}, we establish the NP-hardness for unweighted $n$-vertex graphs where every vertex has at least $n-6$ neighbours.

Note that all pairs of vertices will be split into distinct clusters at some point in the HC-tree, namely at their least common ancestor. Minimizing the DC-cost encourages pairs of adjacent vertices (similar data points) to be split lower in the tree than non-adjacent vertex pairs (non-similar data points). For example, if $G$ is the complement of a bipartite graph on color classes $A, B$ then any HC-tree $T$ that splits $A$ and $B$ at the root is optimal, which follows easily from observations in \cite{Das16} since $G[A]$ and $G[B]$ are complete graphs. 
Dasgupta showed that minimizing the DC-cost of $G$ is equivalent to maximizing the DC-cost of the complement of $G$. Thus the previous result can be restated to say that for a bipartite graph any HC-tree splitting the two color classes at the root will have max DC-cost, rendering the result trivial as all edges are now split at the root.
In the current paper we will usually take this viewpoint, thus considering {\em unweighted sparse} graphs and looking for an HC-tree {\em maximizing} the DC-cost, typically splitting pairs of adjacent vertices, now denoting non-similarity, at higher levels of the tree.

As noted, bipartite graphs are then trivial, but what other graphs can be handled efficiently? What about $G$ being a collection of disjoint copies of the same bipartite graph? Maximizing DC-cost is still trivial, in fact $G$ is again bipartite, so at the root we can simply split each copy in the same optimal way. Let us define a more complex property generalizing this behavior. Consider a graph $H$ of max DC-cost $W$ achievable by some HC-tree $T$ and let the graph $H^{(k)}$ consist of $k$ disjoint copies of $H$. If we use $T$ to simultaneously cluster each of the $k$ copies of $H$ then each leaf of $T$ will contain $k$ copies of the same vertex. These vertices induce a stable set so we can further cluster them in an arbitrary way to get an HC-tree $T^{(k)}$. Note that this tree will have DC-cost $k^2W$ since each edge of $H$ has $k$ copies in $H^{(k)}$, and the subtree of $T^{(k)}$ that splits an edge contains a multiplicative factor $k$ more vertices than the similar subtree of $T$. We call such $H$ max-well-behaved if for any $k$ the max DC-cost of $H^{(k)}$ is no higher than $k^2t$, and the complement of $H$ min-well-behaved. 

\begin{figure}[t]
\centering
\begin{tikzpicture}
[innernode/.style={circle,draw=black,fill=black,inner sep=0pt,minimum size=1mm},
leaf1/.style={circle,draw=blue,fill=blue,inner sep=0pt,minimum size=1.5mm},
leaf2/.style={circle,draw=red,fill=red,inner sep=0pt,minimum size=1.5mm}]

\node [black] at (-10.5,-3.5) {$Q_{2,3}$};

\node (i1) at (-11.1,-1) [leaf1] {} ;
\node (i2) at (-11.1,-1.8) [leaf1] {} ;
\node (i3) at (-11.1,-2.6) [leaf1] {} ;
\node (c1) at (-9.9,-1.4) [leaf2] {} ;
\node (c2) at (-9.9,-2.2) [leaf2] {} ;

\draw [-] (c1) to (c2);

\foreach \x in {1,2,3}{
	\foreach \y in {1,2}{
		\draw [-] (i\x) to (c\y) ;
		\node [red,right] at (c\y.east) {$c_{\y}$};
		}
	\node [blue,left] at (i\x.east) {$s_{\x}~$};
	}

\node [black] at (-7+0.25,-3.5) {$T$};

\node (y0) at (-7+0.25,-1) [innernode] {} ;
\node (y1) at (-8+0.25,-1.5) [innernode] {} ;
\node (y2) at (-6+0.25,-1.5) [innernode] {} ;
\node (y3) at (-8.5+0.25,-2) [innernode] {} ;

\draw [-] (y0) to (y1);
\draw [-] (y0) to (y2);
\draw [-] (y1) to (y3);

\foreach \x in {1,2}{
	\node (t\x) at (-7+0.25-\x,-2.5) [leaf1] {} ;
	\node [blue,below] at (t\x.south) {$s_{\x}$};
}

\node (t3) at (-7.5+0.25,-2) [leaf1] {} ;
	\node [blue,below] at (t3.south) {$s_3$};

\draw [-] (y3) to (t1);
\draw [-] (y3) to (t2);
\draw [-] (y1) to (t3);

\foreach \x in {1,2}{
	\node (u\x) at (-7.5+0.25+\x,-2) [leaf2] {} ;
	\node [red,below] at (u\x.south) {$c_{\x}$};
}

\draw [-] (y2) to (u1);
\draw [-] (y2) to (u2);

\node [black] at (-2,-3.5) {$T'$};

\node (a0) at (-2,0) [innernode] {} ;
\foreach \x in {1,2,3,4}{
	\node (a\x) at (-\x/2-2,-\x/2) [innernode] {} ;
	\node (b\x) at (\x/2-2,-\x/2) [innernode] {} ;
	}

\foreach \x in {1,2}{
	\node (i\x) at (-4.75+\x/2,-2.5) [leaf1] {} ;
	\node [blue,below] at (i\x.south) {$s_{\x}$};
	}
\node (i3) at (-3.25,-2) [leaf1] {} ;
\node [blue,below] at (i3.south) {$s_3$};

\node (j1) at (-0.75,-2) [leaf1] {} ;
\node [blue,below] at (j1.south) {$s'_{1}$};
\foreach \x in {2,3}{
	\node (j\x) at (0.75-2+\x/2,-2.5) [leaf1] {} ;
	\node [blue,below] at (j\x.south) {$s'_{\x}$};
	}

\foreach \x in {1,2}{
	\node (c\x) at (-\x/2-2+1/4,-\x/2-1/2) [leaf2] {} ;
	\node [red,below] at (c\x.south) {$c'_{\x}$};
	\node (d\x) at (\x/2-2-1/4,-\x/2-1/2) [leaf2] {} ;
	\node [red,below] at (d\x.south) {$c_{\x}$};
	}

\foreach \x/\y in {0/1,1/2,2/3,3/4}
	\draw [-] (a\x) to (a\y);
\draw [-] (a0) to (b1) ;
\foreach \x/\y in {1/2,2/3,3/4}
	\draw [-] (b\x) to (b\y);

\foreach \x in {1,2}{
	\draw [-] (a\x) to (c\x);
	\draw [-] (b\x) to (d\x);
}

\draw [-] (i1) to (a4);
\draw [-] (i2) to (a4);
\draw [-] (i3) to (a3);
\draw [-] (j1) to (b3);
\draw [-] (j2) to (b4);
\draw [-] (j3) to (b4);

\end{tikzpicture}
\caption{The complete split graph $Q_{2,3}$
is not max-well-behaved. We have $\DCcost(Q_{2,3},T)= 6 \times 5 + 1 \times 2 = 32$ which is the maximum possible. The HC-tree $T'$ of $Q_{2,3}^{(k)}$ with $k=2$ (vertices $s_1,c_1,...$ in one copy and $s'_1,c'_1,...$ in the other copy) satisfies $\DCcost(Q_{2,3}^{(k)},T')=130$ which is larger than $\DCcost(Q_{2,3},T)\times k^2=128$, i.e. the DC-cost of the factorized \HCtree{} clustering both copies according to $T$ simultaneously. }\label{fig:max-well-behave}
\end{figure}
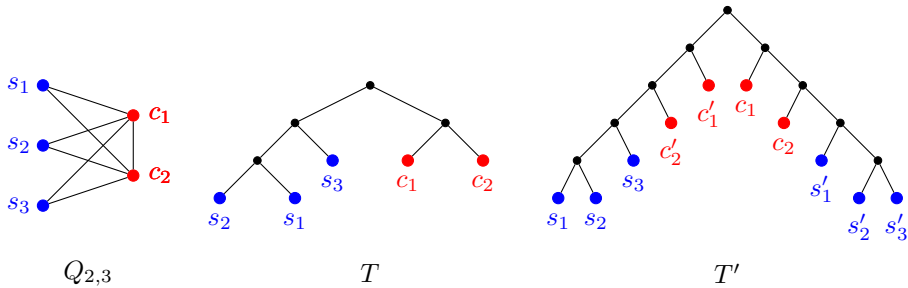

We have argued that any bipartite graph is max-well-behaved, but this is not the case for all $H$. 
For a simple example, in Figure 1 we see that complete split graphs are not max-well-behaved. 
In this paper, as a spin-off of our NP-completeness proof, we initiate the study of well-behaved graphs. We introduce a normalization procedure that makes incremental changes to a given HC-tree of some $H^{(k)}$, while observing monotonicity in the DC-cost, to arrive at a new HC-tree showing that $H$ is well-behaved. We employ this to show that the prism graph (the complement of a 6-cycle) is max-well-behaved, and thus $C_6$ min-well-behaved, establishing  the aforementioned NP-completeness along the way. 

\section{Preliminaries}
We use standard graph-theoretic notation \cite{Die05}. 
A hierarchical clustering of a similarity graph $G = (V,E)$ is a full rooted binary tree $T$, together with a bijection $\delta$ from $V$ to $L(T)$, the set of leaves of $T$. We call such a pair $(T, \delta)$ an HC-tree of $G$. For a node $t$ of $T$ we denote by $T[t]$ the subtree of $T$ rooted at $t$.
The Dasgupta cost function \cite{Das16} is this (lca means least common ancestor):
$$\DCcost(G,(T,\delta)) = \sum_{uv\in E}w(uv)\cdot|L(T[x])|:x\text{ is the lca of }\delta(u)\text{ and }\delta(v)$$
and an HC-tree of minimum DC-cost (under Dasgupta's objective function) is thus an HC-tree $(T^*,\delta^*)$ that minimizes DC-cost. 

Dasgupta shows that any HC-tree with minimum weight for graph $G$ is also an HC-tree with maximum weight for its complement $\overline{G}$.
We consider only unweighted graphs, equivalently $w(uv)=1$ for all $uv \in E$ and $0$ otherwise. 
For any node $t\in T$, we define $G_{(T,\delta)}[t]$ as the subgraph of $G$ induced by $\delta^{-1}(L(T[t]))$, the vertices of $G$ mapped to leaves in $T[t]$.
Similarly, for any two nodes $t_1,t_2\in T$ with $L[t_1]\cap L[t_2] = \emptyset$, we define $G_{(T,\delta)}[t_1,t_2]$ 
as the bipartite subgraph of $G$ consisting of all edges with one endpoint in  $\delta^{-1}(L(T[t_1]))$ and the other endpoint in $\delta^{-1}(L(T[t_2]))$.
If $(T,\delta)$ is inferred from context, we further shorten these to $G[t]$ and $G[t_1,t_2]$.
We can now simplify the Dasgupta cost function on unweighted graphs as follows:
$$\DCcost(G,(T,\delta)) = \sum_{t\in V(T)\backslash L(T)} |V(G[t])|\cdot|E(G[c_l,c_r])|:c_l,c_r\text{ children of }t$$


We start with a simple but useful fact.

\begin{property}\label{Fact:EdgeDisjSum}
Let $G,G'$ be two edge-disjoint graphs over the same vertex set $V(G)$, and $(T,\delta)$ an HC-tree of $V$. The DC-cost of the decomposition on their union $G^U = (V(G),E(G)\cup E(G'))$ is the sum of the costs on each graph:
$$\DCcost(G^U,(T,\delta)) = \DCcost(G,(T,\delta))+\DCcost(G',(T,\delta))$$
\end{property}

\begin{proof}
The cost of $(T,\delta)$ on $G^U$ is simply the sum, over every edge $e\in E(G^U)$, of the size (i.e. number of vertices) of the subgraph in which $e$ is cut. This is the same as adding together the sums over every edge in $G$ and every edge in $G'$.
\end{proof}

\begin{corollary}[\emph{\cite{Das16}, Section 4.1}]\label{Lemma:MinMaxEquiv}
An HC-tree of $G$ with minimum DC-cost is also an \HCtree{} of $\overline{G}$ with maximum DC-cost.
\end{corollary}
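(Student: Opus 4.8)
The plan is to reduce the statement to one fact: for any HC-tree, the DC-cost of a \emph{complete} graph depends only on the number of vertices. Write $n=|V(G)|$ and let $K_n$ be the complete graph on $V(G)$. Since $G$ and $\overline{G}$ are edge-disjoint with $E(G)\cup E(\overline{G})=E(K_n)$, Property~\ref{Fact:EdgeDisjSum} gives, for every HC-tree $(T,\delta)$ of $V(G)$,
$$\DCcost(K_n,(T,\delta))=\DCcost(G,(T,\delta))+\DCcost(\overline{G},(T,\delta)).$$
Hence it suffices to prove that $\DCcost(K_n,(T,\delta))$ equals a constant $C_n$ that does not depend on $(T,\delta)$: once this is known, $\DCcost(\overline{G},(T,\delta))=C_n-\DCcost(G,(T,\delta))$ for every HC-tree on $V(G)$, so an HC-tree that minimizes the summand $\DCcost(G,\cdot)$ is exactly one that maximizes $\DCcost(\overline{G},\cdot)$, which is the claim.

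To establish the constancy, I would start from $\DCcost(K_n,(T,\delta))=\sum_{\{u,v\}}|L(T[x_{uv}])|$, where $x_{uv}$ denotes the lca of $\delta(u)$ and $\delta(v)$ and the sum ranges over the $\binom{n}{2}$ unordered pairs of distinct vertices. Since $\delta(u),\delta(v)\in L(T[x_{uv}])$ always, write $|L(T[x_{uv}])|=2+|\{w\notin\{u,v\}:\delta(w)\in L(T[x_{uv}])\}|$; the constant terms contribute $2\binom{n}{2}$ in total. For the remainder I would reorganize the count by triples: fix distinct vertices $u,v,w$ and let $z$ be the lca of $\delta(u),\delta(v),\delta(w)$. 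As $T$ is binary and $z$ is their common lca, exactly two of the three leaves lie in one child subtree of $z$ and the third in the other. A short case analysis over which of $u,v,w$ plays the ``outside'' role then shows the following: if $y$ denotes the lca of the other two vertices, the outside vertex's leaf lies in $L(T[y])$ in precisely two of the three cases — namely whenever the singled-out vertex is one of the two leaves sharing a child subtree of $z$ (when instead the outside vertex is the lone leaf on its side of $z$, the lca $y$ is a strict descendant of $z$ lying on the other side, so $L(T[y])$ misses the outside leaf). Thus the remainder sums to $2\binom{n}{3}$, and $\DCcost(K_n,(T,\delta))=2\binom{n}{2}+2\binom{n}{3}=:C_n$, independent of $(T,\delta)$.

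Combining the two displays proves the corollary. The only non-routine step is the constancy of the complete-graph cost, and within it the ``exactly two of three'' dichotomy for a triple inside a binary tree; this is where I would be careful to check all configurations (in particular that $x_{uv}$ may be a strict descendant of the triple's lca). The rest is bookkeeping with Property~\ref{Fact:EdgeDisjSum}, which already does the heavy lifting of splitting the cost across $G$ and $\overline{G}$.
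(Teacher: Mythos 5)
Your proposal is correct and follows the paper's proof exactly in structure: decompose via Property~\ref{Fact:EdgeDisjSum} and use the fact that every HC-tree of $K_n$ has the same cost. The only difference is that the paper cites this constancy from Dasgupta (Theorem~3 of \cite{Das16}) while you prove it directly by the triple-counting argument, correctly recovering $2\binom{n}{2}+2\binom{n}{3}=\frac{1}{3}(n^3-n)$.
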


\begin{proof}
$\overline{G}$ is by definition edge-disjoint from $G$, therefore 
$\DCcost(G^U,(T,\delta)) = \\ \DCcost(G,(T,\delta))+\DCcost(\overline{G},(T,\delta))$
by Property \ref{Fact:EdgeDisjSum}. But the union of $G$ and $\overline{G}$ is isomorphic to $K_n$ where $n=|V(G)|$, and we know that every \HCtree{} of $K_n$ has the same cost, namely $\frac{1}{3}(n^3-n)$ (\cite{Das16}, Theorem 3). Therefore, for any \HCtree{} $(T,\delta)$,
$\DCcost(\overline{G},(T,\delta)) = \frac{1}{3}(n^3-n)-\DCcost(G,(T,\delta))$.
We conclude that a \HCtree{} of $G$ with minimum cost is a \HCtree{} of $\overline{G}$ with maximum cost, and vice versa.
\end{proof}

\section{Well-behaved Graphs}\label{Sct:MaxCost}

Minimizing DC-cost of a graph is accomplished by the exact same HC-trees that maximize DC-cost for the complement graph. However, for specific graph classes, like bipartite graphs, it can be easy to find an HC-tree maximizing the DC-cost but hard to minimize the DC-cost, or vice-versa.
Let us consider a very simple operation to construct sparse graphs. Take $G^{(k)}$, consisting of $k$ disjoint copies of some graph $G$. If we are given an HC-tree $T$ for $G$ of minimum DC-cost then any HC-tree for $G^{(k)}$ hierarchically clustering each copy of $G$ as done in $T$ will have minimum DC-cost.
However, maximizing the DC-cost for $G^{(k)}$ seems harder. Given an HC-tree $T$ of maximum DC-cost for $G$ we call any HC-tree for $G^{(k)}$ that hierarchically clusters each copy of $G$ as in $T$ a factorized \HCtree{}. Let us define this formally:

\begin{definition}[Factorized \HCtree]
Let $G$ be a graph, $(T,\delta)$ an HC-tree of $G$ of maximum DC-cost $W$, and $k$ a natural number. A \emph{factorized \HCtree} $(T,\delta)^{(k)}$ of the graph $G^{(k)}$ is made as follows: Make a copy of $(T,\delta)$ and for every node $t$, make
$$G^{(k)}_{(T,\delta)^{(k)}}[t] = \bigcup_{i=1}^k G_{(T,\delta)}[t]$$
This is not a complete \HCtree{}, since for $t\in L(T)$, $G^{(k)}[t]$ is not a single vertex, but $k$ vertices. But these $k$ vertices are all disjoint, therefore any extension of this partial \HCtree{} will have the same DC-cost $k^2W$ and be regarded as a factorized \HCtree{}. 
\end{definition}

As previously mentioned, if $G$ is bipartite then for any $k$ the factorized \HCtree{} for $G^{(k)}$ will have max DC-cost. We give this property a name.

\begin{definition}[Well-behaved graph]
Let $G$ be an unweighted graph, and $W$ the maximum DC-cost over HC-trees of $G$. We call $G$ \emph{max-well-behaved}, or just \emph{well-behaved} if, for any natural number $k$, the maximum Dasgupta cost over HC-trees of the graph $G^{(k)}$ is equal to $k^2W$. The complementary graph $\overline{G}$ is called \emph{min-well-behaved}.
\end{definition}

So any bipartite graph $G$ is well-behaved and thus computing the max DC-cost of any $G^{(k)}$ can be reduced to computing the max DC-cost of $G$, or equivalently, computing the min DC-cost of $\overline{G^{(k)}}$ (the \emph{join} of $k$ copies of $\overline{G}$) reduces to computing the min DC-cost of $\overline{G}$. We may naturally ask: Is every graph well-behaved?  On the contrary, counterexamples abound, even for very small graphs, see Figure \ref{fig:max-well-behave} for an example.

How to show that some interesting non-bipartite graph $G$ is well-behaved? We need to show that for any value of $k$ no \HCtree{} of $G^{(k)}$ has higher DC-cost than the factorized \HCtree{}. We will show this by what we call a normalization procedure on HC-trees: starting with an arbitrary \HCtree{} we incrementally, step by step, modify it into the factorized \HCtree{} and show that at no step does the cost decrease. We formalize this notion:

\begin{definition}[Safe operation]
An operation that takes an \HCtree{} of a graph $G$ as input and outputs another \HCtree{} of the same graph is called {\em safe} (for maximization) if the DC-cost of the input is no larger than the DC-cost of the output.
\end{definition}

\begin{property}\label{Fact:ExistsProc}[Normalization Procedure]
Let $G$ have max HC-tree $(T,\delta)$. If there is a procedure that for any $k$ takes as input any HC-tree of $G^{(k)}$, iteratively applies safe operations, and outputs a factorized \HCtree{} $(T,\delta)^{(k)}$ of $G^{(k)}$ then $G$ is well-behaved.
\end{property}

The prism $P$ is the graph on six vertices shown in Figure \ref{fig:prismvars}.
It is non-bipartite, and its complement is a cycle.
$P$ exhibits a high degree of symmetry (it is vertex-transitive), and thus has a limited number of non-isomorphic decompositions. The optimal \HCtree{} we will base our normalization procedure around is also shown in Figure \ref{fig:prismvars}, and has the maximum cost of 48 (note $P$ has also another optimal HC-tree{}). To be convinced that this is indeed optimal, note that in a minimum optimal \HCtree{} $(T,\delta)$ of its complement, every subgraph induced by a node in $T$ must be connected if the whole graph is connected.
We
will show in Section \ref{Sct:Proc} a normalization procedure for the prism as described in Property \ref{Fact:ExistsProc} to establish the following:

\begin{lemma} \label{Lemma:PrismMax}
The prism is max-well-behaved, and thus $C_6$ is min-well-behaved.
\end{lemma}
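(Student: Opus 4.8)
The plan is to apply the Normalization Procedure of Property \ref{Fact:ExistsProc}. Fix $k$ and let $(T,\delta)$ be an arbitrary HC-tree of the graph $P^{(k)}$, consisting of $k$ disjoint copies $P_1,\dots,P_k$ of the prism. I want to describe a finite sequence of safe operations (for maximization) transforming $(T,\delta)$ into a factorized HC-tree $(T_0,\delta_0)^{(k)}$, where $T_0$ is the optimal HC-tree of the prism shown in Figure \ref{fig:prismvars} of cost $48$. Since each safe operation does not decrease the DC-cost, this shows $\DCcost(P^{(k)}) \le 48k^2$, which together with the trivial lower bound given by the factorized tree yields equality, proving $P$ is max-well-behaved; then $C_6 = \overline{P}$ is min-well-behaved by definition.

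The work is in designing the operations. First I would set up notation for how a given HC-tree "cuts" each copy $P_i$: for a node $t$, each copy contributes some induced subgraph $P_i \cap G[t]$, and the DC-cost decomposes as a sum over the six (unordered) edges of each copy of the subtree-size at the lca of that edge's endpoints (using Property \ref{Fact:EdgeDisjSum} applied to the $6k$ edges). The strategy I propose is a two-phase normalization. In Phase 1, operations that \emph{align} the copies: I would argue that by rotations/swaps of subtrees one can assume the tree has a single "spine" along which the copies are split in a coordinated way, so that at each internal node the same structural split of the prism is being applied to every copy simultaneously — i.e., the tree becomes (a refinement of) a factorized tree for \emph{some} HC-tree $T'$ of $P$, not yet necessarily the optimal $T_0$. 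Here the key sub-step is: whenever two copies are split "incompatibly" at a node, a local surgery that makes them compatible can only help, because the prism is vertex-transitive and has few non-isomorphic partial decompositions, so one can check the finitely many cases. In Phase 2, once the tree is factorized through some $T'$, I would use safety of operations that locally re-shape $T'$ toward $T_0$: since in a factorized tree through $T'$ the DC-cost is exactly $k^2 \cdot \DCcost(P,T')$, any cost-improving local move on the single small tree $T'$ lifts to a safe operation on $P^{(k)}$, and a short case analysis over the non-isomorphic HC-trees of the 6-vertex prism shows $T_0$ is the unique (up to the noted alternative) maximum, so repeated such moves terminate at $(T_0,\delta_0)^{(k)}$.

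The main obstacle I anticipate is Phase 1: proving that "misaligned" copies can always be realigned by a safe operation. The difficulty is that a single HC-tree of $P^{(k)}$ can split different copies at wildly different depths and in different structural ways, and a naive realignment that helps one pair of copies could in principle hurt the interaction-free-ness of another; one must choose the surgery carefully (e.g., always realigning the "shallowest disagreement" first, or using an extremal/minimal-counterexample argument on some potential like total disagreement) and verify monotonicity of DC-cost through the surgery. Because the edges of distinct copies are edge-disjoint, Property \ref{Fact:EdgeDisjSum} lets me analyze the cost change copy-by-copy, which should make each local verification a finite check on the prism — but bounding the number of cases and ensuring the procedure is genuinely finite (terminates) is where the care is needed. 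I expect Phase 2 to be comparatively routine: a bounded enumeration of HC-trees on six vertices, computing DC-cost of the prism for each, and checking that the greedy local improvements reach cost $48$.
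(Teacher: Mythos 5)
Your overall framework --- a sequence of safe operations driving an arbitrary HC-tree of $P^{(k)}$ to the factorized tree, justified via Property \ref{Fact:ExistsProc} and the edge-disjoint decomposition of Property \ref{Fact:EdgeDisjSum} --- is exactly the paper's, but the substance of the argument is missing. Your Phase 1 (``realign all copies so the tree factorizes through \emph{some} $T'$'') is not a reduction of the problem; it \emph{is} the problem. The entire difficulty of max-well-behavedness is showing that an optimal HC-tree of $P^{(k)}$ may be assumed factorized, and your justification --- that any local surgery making two incompatibly-split copies compatible ``can only help'' because the prism is vertex-transitive with few non-isomorphic decompositions --- is asserted, not proved. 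It also cannot be proved by a purely local, copy-by-copy argument: Figure \ref{fig:max-well-behave} exhibits a five-vertex graph that is \emph{not} max-well-behaved, so ``realignment is always safe'' is false in general and any correct proof must exploit the prism's structure quantitatively, not just its symmetry.

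The paper's proof shows concretely why your anticipated obstacle is real and why ``choosing the surgery carefully'' does not suffice. Its procedure (a top-down Cut Optimization pass, then a bottom-up pass of Left-Heavy Distribution and Balancing) reduces to 6 sub-prism types and 12 split types, and then computes, for every pair of split types cohabiting at a node, the net change in DC-cost of the balancing move (Table \ref{tab}). Two of these coefficients are \emph{negative}: a node containing only $S_6\rightarrow(S_4,S_2)$ and $S_3\rightarrow(S_3,\emptyset)$ splits strictly \emph{loses} cost when balanced. So the individual local operations you are counting on are not all safe; safety holds only in aggregate, by showing the loss at such a node is outweighed by forced gains at its ancestors. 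Your proposal has no mechanism for this kind of amortized accounting --- your ``shallowest disagreement first'' or ``minimal counterexample'' heuristics would still require exactly this global bookkeeping to verify monotonicity. Phase 2 is indeed routine (the paper agrees), but as written the proposal leaves the essential step unestablished.
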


This result is non-trivial, and should be seen in light of e.g. the five-vertex graph in Figure \ref{fig:max-well-behave}, whose complement is a 3-cycle and two isolated vertices, that is not max-well-behaved.




\section{NP-Hardness for Unweighted Graphs}

Dasgupta shows that for edge-weighted graphs, finding an HC-tree of maximum DC-cost is NP-hard, by reduction 
from an NP-complete problem he called NAESAT*:

\begin{definition}[NAESAT*]
We are given a boolean CNF formula where every clause contains either two or three literals (called "2-clauses" and "3-clauses", respectively), and every variable appears in exactly one 3-clause, and in exactly two 2-clauses with one appearance positive and the other negative. Moreover, no 2-clause nor its copy with polarities reversed is part of any 3-clause. Is there a not-all-equal-satisfying assignment, i.e. one where every clause contains at least one true and one false literal?
\end{definition}


Dasgupta first gave a simple reduction from NAE3SAT, where every clause has exactly 3 literals but there is no restriction on how many times each variable appears in the formula, to NAESAT*. In that reduction it follows trivially that no 2-clause nor its copy with polarities reversed will be contained in a 3-clause, so we have included that property in our definition of NAESAT*.
We will assume, as Dasgupta \cite{Das19} does, that if there is a 2-clause $C$ whose literals also appear in a 2-clause $C'$, but with reversed polarity, then $C'$ is removed.


Dasgupta's reduction to hierarchical clustering takes as input a NAESAT* formula $\varphi$ on $n$ variables with $m=\frac{1}{3}n$ 3-clauses and $m'\leq n$ 2-clauses, and constructs a graph $G$ with two vertices for each variable $x$ appearing in the formula $\varphi$: one corresponding to $x$ and one to $\overline{x}$. For every 2-clause $(\tilde{x}\vee\tilde{y})$, where a variable with a tilde above, $\tilde{x}$, is shorthand for "$x$ or $\overline{x}$", he adds an edge between $\tilde{x}$ and $\tilde{y}$, and also between $\overline{\tilde{x}}$ and $\overline{\tilde{y}}$ (these $2m'$ edges are called the \emph{2-clause edges}). For every 3-clause $(\tilde{x}\vee\tilde{y}\vee\tilde{z})$, he adds a triangle between $\tilde{x}$, $\tilde{y}$ and $\tilde{z}$, and also between $\overline{\tilde{x}}$, $\overline{\tilde{y}}$ and $\overline{\tilde{z}}$ (these $6m$ edges are called the \emph{3-clause edges}). In addition, he adds one edge between $x$ and $\overline{x}$ for every variable (these $n$ edges are called the \emph{matching edges}). He shows that $\varphi$ is in NAESAT* if and only if $G$ has {\em weighted} DC-cost at least $M$ (for some fixed $M$ that we do not specify here). Let us see how this comes about. Given a not-all-equal assignment of truth values to the $n$ variables of $\varphi$, he constructs an HC-tree of $G$ 
by first splitting $V(G)$ evenly at the root into True literals and False literals and then  splitting all remaining edges at the next level.

This \HCtree{} cuts all $n$ matching edges at the top since $x$ and $\overline{x}$ have opposite truth values. Since the assignment is not-all-equal satisfying all $2m'$ 2-clause edges are cut at the top, and also $4m$ of the $6m$ 3-clause edges are cut at the top. Thus $4m+2m'+n$ are cut at the top. The remaining $2m$ 3-clause edges are all disjoint, without sharing any endpoints, and can thus be cut in one single split at the level below the root. Dasgupta in his reduction gives a high weight to the matching edges (specifically, the matching edges have weight $2nm+1$) to ensure that any HC-tree of weighted DC-cost $M$ will be a tree that cuts all matching edges at the top. Note that an HC-tree cutting all matching edges at the top will naturally define a truth assignment to the variables of the formula. We will show the same result even when all edges have unit weight; this will imply the following:

\begin{theorem}
Hierarchical clustering of unweighted graphs is NP-hard.
\end{theorem}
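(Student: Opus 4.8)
The plan is to reduce from NAESAT* exactly as Dasgupta does, but to replace his weighting of the matching edges by a \emph{gadget of disjoint copies} that forces any maximum HC-tree to cut all matching edges at the root. Concretely, for the NAESAT* formula $\varphi$ on $n$ variables, form the graph $G$ with vertices $x,\overline x$ and all $2$-clause, $3$-clause and matching edges as in the excerpt, and then for each variable $x$ attach a large but polynomially-bounded collection of pendant structure (or simply take many parallel-sounding copies of the matching edge realized through auxiliary vertices that make the relevant cut cheap only when $x$ and $\overline x$ are separated at the root). The point is that in the unweighted setting a single matching edge contributes at most $2n$ to the DC-cost, so we must \emph{amplify} its contribution by blowing up the instance. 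Since $G$ has $N:=\Theta(n)$ vertices, an edge cut at the root contributes $N$; the trick is to arrange that \emph{failing} to cut a matching edge at the root loses $\Omega(n)$ per variable while the $2m$ leftover $3$-clause edges can still be packed into one sublevel split, so the threshold $M$ separates not-all-equal-satisfiable formulas from the rest.

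The first step is to fix the construction and compute the DC-cost of the canonical HC-tree obtained from a not-all-equal-satisfying assignment: split $V(G)$ into True/False literals at the root (cutting all $n$ matching edges, all $2m'$ $2$-clause edges, and $4m$ of the $6m$ $3$-clause edges, since a not-all-equal assignment leaves exactly one monochromatic pair per triangle on each side), then cut the remaining $2m$ pairwise-disjoint $3$-clause edges one level down. Plugging in $N=2n$ (or whatever the final vertex count is) gives an explicit value $M$. The second step, the converse, is the heart of the argument: assume an HC-tree $(T,\delta)$ with $\DCcost(G,(T,\delta))\ge M$ and show it must cut all matching edges at the root, which then reads off a truth assignment; the not-all-equal condition follows because any monochromatic $2$-clause or an all-monochromatic $3$-clause would force an edge to be cut strictly below the root, costing at most $N-1<N$ and hence losing at least $1$ against the budget — and the budget has been set with no slack.

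The main obstacle is precisely showing that a near-optimal HC-tree is forced into the ``two levels only'' shape, i.e. that it cannot compensate for \emph{not} cutting a matching edge at the root by some clever rearrangement elsewhere. This is where I expect to invoke the \emph{normalization procedure} advertised in the introduction: starting from an arbitrary HC-tree of cost $\ge M$, apply a sequence of safe operations (pushing edges up toward the root, balancing the root split toward an even bipartition, collapsing unnecessary depth) until the tree has the canonical two-level form, using Property~\ref{Fact:EdgeDisjSum} to track the cost contribution edge-class by edge-class. Each safe operation is argued by the same local exchange computations Dasgupta uses, but now in the unit-weight regime one must be careful that moving one edge up does not silently move another down; the bookkeeping is that the root split size is $N$ regardless, so it is always (weakly) profitable to have as many edges as possible cut there, and the only obstruction is vertex-sharing among edges, which is exactly what the not-all-equal structure controls. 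Once the canonical form is forced, matching the cost to $M$ pins down the assignment, and the matching edges being cut at the root guarantees the assignment is well-defined (no variable is ``both true and false''); combined with the first step this gives the ``if and only if'', and since the reduction is clearly polynomial and the problem is in NP (an HC-tree is a polynomial certificate whose DC-cost is checkable in polynomial time), we conclude the stated theorem.
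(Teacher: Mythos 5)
Your proposal correctly identifies the crux --- in the unit-weight regime one can no longer use heavy matching edges to force the root split, and the danger is that a tree which fails to cut everything at the root might compensate by cutting the $2m$ leftover $3$-clause edges (worth only $n$ each in the canonical tree) inside a large, unbalanced subtree worth up to $2n-1$ each. But neither of your two proposed remedies closes this gap. First, the gadget amplification is never specified: you cannot amplify a single edge's contribution beyond $N$ in a simple unweighted graph, and ``pendant structures'' or ``auxiliary vertices'' would change $N$ and the entire cost accounting; the paper in fact uses Dasgupta's graph \emph{unchanged}, with unit weights, so no amplification is needed --- what is needed is a different converse argument. Second, your plan to run a normalization procedure on the whole graph $G$ to force a two-level canonical shape is not workable as stated: a general ``push edges toward the root'' operation is not safe on arbitrary graphs (if it were, max DC-cost would be easy), and the paper's normalization procedure is tailored to a disjoint union of prisms, with safety established by an explicit coefficient table specific to the prism's eleven subgraphs; it says nothing about $G$, which also contains the $2$-clause edges.

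The paper's actual argument splits into two cases. For trees with an \emph{even} root split, a direct counting argument (no split cuts more than two edges of a triangle) shows the bound $W^*=10nm+4nm'+2n^2$ is attained only by NAE-satisfying assignments. For \emph{uneven} root splits, it partitions $E(G)$ via Property~\ref{Fact:EdgeDisjSum} into $G'$ (the $2m'$ $2$-clause edges, a disjoint union of edges and cycles, hence not all cuttable at the top by an uneven split, contributing strictly less than $4nm'$) and $G''$ ($m$ disjoint prisms formed by the $3$-clause and matching edges). The tree would then need $\DCcost(G'')>48m^2$, contradicting Lemma~\ref{Lemma:PrismMax} that the prism is max-well-behaved. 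This decomposition, and the role of the well-behavedness lemma as the tool that caps the contribution of the prisms under \emph{any} HC-tree, is the missing idea in your write-up; without it (or a concrete, verified gadget), the converse direction does not go through.
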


\begin{proof}
Let the graph $G$ constructed by the Dasgupta reduction when given $\varphi$ be unweighted. What is then the cost of the HC-tree described above on $G$, given some not-all-equal assignment of the underlying Boolean formula $\varphi$? As described above, in $G$ there are $4m+2m'+n$ edges that are cut at the top and each receive a cost of $2n$, and $2m$ edges that are split at the next level and each receive a cost of $n$. The total cost is thus $W^*=10nm+4nm'+2n^2$. We have already argued that if $\varphi$ is not-all-equal-satisfiable then DC-cost of $G$ is at least $W^*$, but now we need to argue the converse. If we restrict to HC-trees that split $V(G)$ into two equally big parts, then we see that $W^*$ is the maximum possible and it can only be reached if the resulting assignment is not-all-equal satisfying. This is because it will have to cut all matching edges at the top and furthermore there is no way to cut more than two edges of a triangle in a single split.

It remains to show that an HC-tree not splitting $V(G)$ evenly at the top will have DC-cost less than $W^*$. To this purpose, we partition the edges of $G$ into two subgraphs $G'$ and $G''$, with $G'$ being the graph containing only the $2m'$ 2-clause edges, and $G''$ containing the 3-clause edges and matching edges. We observe that the 3-clause edges comprise $2m$ disjoint triangles, and that the matching edges bind together pairs of triangles, as shown in Figure \ref{fig:prismvars}. This means that $G''$ is a collection of $m$ disjoint prisms.
The graph $G'$ is also easy to describe; every variable appears in either one or two 2-clauses. It will belong to a single 2-clause when there was a 2-clause $C$ whose literals also appeared with reversed polarity in a 2-clause $C'$ and $C'$ was removed, otherwise it will belong to two 2-clauses.
Thus $G'$ will be a collection of  disjoint components that are 
1-regular (single edges) or 
2-regular (cycles). Since $G'$ is a collection of edges and cycles it is easy to see that no HC-tree whose root is an uneven split can cut all its $2m'$ edges at the top.
From Property \ref{Fact:EdgeDisjSum} we know that
for an HC-tree $(T,\delta)$ of $G$ we have
 $\DCcost(G,(T,\delta))=\DCcost(G',(T,\delta))+ \DCcost(G'',(T,\delta))$. Thus, for an uneven \HCtree{} $(T,\delta)$ of $G$ to have cost at least $W^*$, then $\DCcost(G'',(T,\delta)')$ must be strictly higher than $W^*-4nm'$ since $G'$ would contribute less than $4nm'$. By the equality $n=3m$, we get 
$$W^*-4nm' = 10mn+2n^2 = 30m^2+18m^2 = 48m^2$$
 so that $G''$ must contribute more than $48m^2$. But our main Lemma \ref{Lemma:PrismMax} showing that the prism is well-behaved, implies that $48m^2$ is the maximum cost achievable for $G''$ being $m$ copies of the prism. It must then be the case that there is no uneven \HCtree{} of $G$ with cost at least $W^*$.

We conclude that there exists an HC-tree of $G$ with weight at least $10nm+4nm'+2n^2$ if and only if the underlying Boolean formula is not-all-equal satisfiable.
\end{proof}

\begin{figure}[ht!]
\centering
\includegraphics[width=\textwidth]{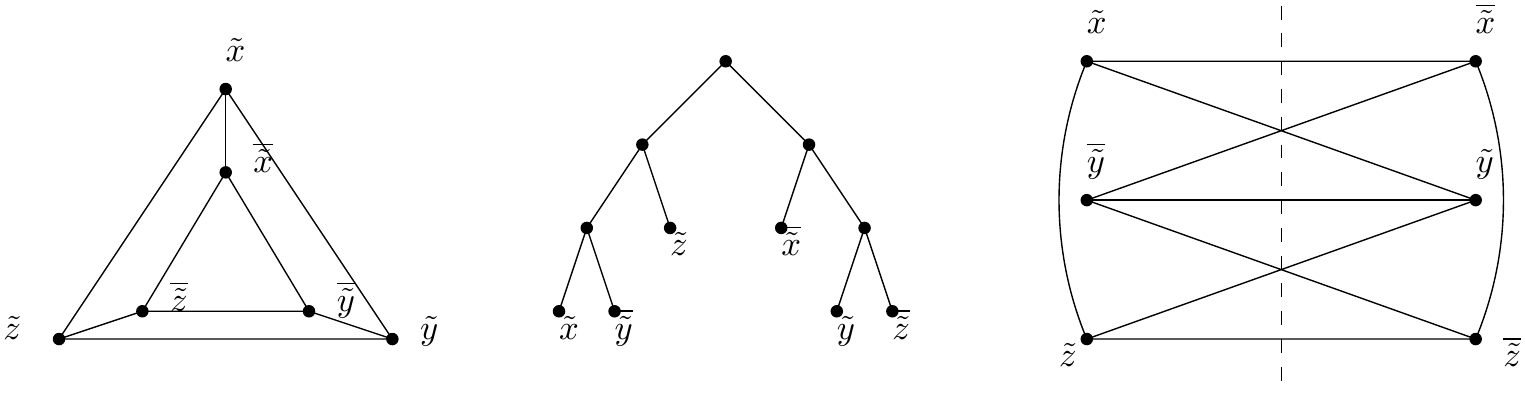}
\caption{The prism $P$, made from 3-clause edges and matching edges. By our definition of NAESAT*, every 3-clause in $\varphi$ is represented in $G$. To the middle and right, one possible HC-tree of $P$ with maximum DC-cost, and the top split of this tree.}
\label{fig:prismvars}
\end{figure}


\section{The Normalization Procedure} \label{Sct:Proc}
We give a normalization procedure for $G=P^{(k)}=P_1\cup P_2\cup\ldots\cup P_k$ consisting of $k$ disjoint copies of the prism $P$. This procedure takes as input an HC-tree for $G$, performs a series of safe operations, and outputs a factorized \HCtree{} where every prism is clustered according to the evenly balanced HC-tree $T$ in Figure \ref{fig:prismvars}.
We could have done this naively by a single Bottom-Up traversal of the tree, performing some PowerfulBalancing operation on each node $t$ of the tree. For every possible split of a subgraph of a prism at node $t$, PowerfulBalancing would have to perform a safe operation that changes this split into one that is closer to the desired end goal. 
However, the number of subgraphs of a prism, and the number of distinct splits of these subgraphs is very high, 11 and 83 respectively. Thus the naive PowerfulBalancing is not a practical option to try and prove that the prism is well-behaved. Instead, our normalization procedure will lower the number of distinct subgraphs and splits of these subgraphs that appear in a node of the tree before doing the Balancing.  In total, we employ 3 subroutines at each node $t$ of the tree: 
\begin{itemize}
    \item Cut Optimization: ensures that every sub-prism split at $t$ involves one of the 6 subgraphs given in Figure \ref{fig:optsub} and is split according to one of 8 specific splits plus 6 distinct mirror-images.
    \item Left-Heavy Distribution: ensures that no sub-prism split at $t$ has the subgraph in the right child bigger than the one in the left child,  restricting to the 8 distinct splits; Figure \ref{fig:optsplit} depicts these splits.
    \item Balancing: ensures that every sub-prism split at $t$ is split as evenly as possible
\end{itemize}
The normalization procedure will make 2 traversals of the tree: the first is a Top-Down traversal that will perform Cut Optimization on each node, the second is a Bottom-Up traversal that on each node will perform Left-Heavy Distribution followed by Balancing.

\begin{algorithm}[ht!]
\begin{algorithmic}
\Function{Normalize}{$G$:graph, $(T,\delta)$:\HCtree, $t\in V(T)$}
\If {$t\in L(T)$}
	\State \Return
\EndIf
\State $c_l,c_r\gets$ Children of $t$ in $T$
\State $\delta\gets$ Cut Optimization (cf. Section \ref{Sct:CutOpt}) on $\delta$ with regards to $G[t]$
\State\Call{Normalize}{$(T,\delta),c_l$}
\State\Call{Normalize}{$(T,\delta),c_r$}
\State $(T,\delta)\gets$ Left-Heavy (cf. Section \ref{Sct:LeftHeavy}) on $(T,\delta)$ with regards to $G[t]$
\State $(T,\delta)\gets$ Balancing Out (cf. Section \ref{Sct:Balance}) on $(T,\delta)$ with regards to $G[t]$
\EndFunction
\\
\Function{Normalization}{$G$:graph, $(T,\delta)$:\HCtree}
\State $r\gets$ Root of $T$
\State \Call{Normalize}{$G$,$(T,\delta)$,$r$}
\EndFunction
\end{algorithmic}
\caption{This pseudocode outlines in which manner the subroutines are called on the \HCtree{} $(T,\delta)$.}
\end{algorithm}

For every prism $P_i$ in $G$ and every internal node $t$ in $T$, we define $P_i[t]$ to be the subgraph of $P_i$ that lies inside the cluster at $t$: $P_i[t]=P_i\cap G[t]$. Each step of the procedure works on each of these subgraphs, striving to optimize the way these subgraphs are split.

In the next section we show that after the Cut Optimization is done on all nodes of the tree, every subgraph $P_i[t]$ is one of the six subgraphs $S_1,\ldots,S_6$ that are depicted in Figure \ref{fig:optsub}. This means that in the continuation we only have to consider splits involving these subgraphs.

We introduce some symbolic notation to easily talk about these splits. Let $t$ be an internal node in the \HCtree{}  $T$ and let $c_l$ and $c_r$ be its children. Let $P_i[t]$ be any subgraph. If we have done Cut Optimization on $(T,\delta)$, we know that $P_i[t]$, $P_i[c_l]$ and $P_i[c_r]$ are isomorphic to some $S_a$, $S_{a_l}$ and $S_{a_r}$, respectively. Then we denote the \emph{split of $P_i$ at $t$} as $S_a\rightarrow(S_{a_l},S_{a_r})$.\\

\begin{figure}[ht!]
\centering
\includegraphics[width=0.9\textwidth]{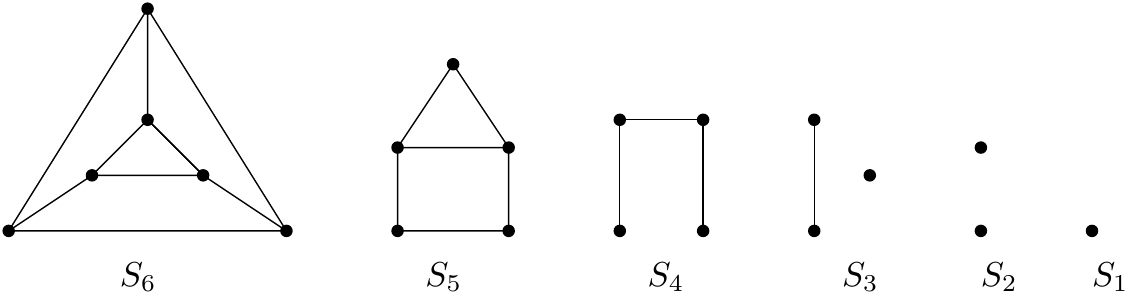}
\caption{The sub-prisms arising from optimal splits}
\label{fig:optsub}
\end{figure}

We must say a few words on what it means for a subtree of an \HCtree{} to be fully normalized, i.e. after we have performed Balancing on the root of the subtree. The end goal is clear: when we are finished, i.e. when we have performed Balancing on the root $r$ of $T$, we want every prism being split into two $S_3$'s at the root, and those $S_3$'s split into $S_2$'s and $S_1$'s at the children of the root, as seen in Figure \ref{fig:prismvars}. But when dealing with the subtree $T[t]$ for a node $t$ further down the tree, the subgraphs involved can be any $S_a$. Therefore we define "fully normalized" as every such $S_a$ in the subtree $T[t]$ being split the same way, for all $a$. The allowed splits are $S_6\rightarrow(S_3,S_3)$, $S_5\rightarrow(S_3,S_2)$, $S_4\rightarrow(S_2,S_2)$ and $S_3\rightarrow(S_2,S_1)$.

The next sections are devoted to proving that in our normalization procedure, both the top-down traversal is a safe operation, performing Cut Optimization on every node, and also the subsequent bottom-up traversal is a safe operation, performing Left-Heavy Distribution followed by Balancing on every node of the tree.






\subsection{Cut Optimization}\label{Sct:CutOpt}

Let $G=P^{(k)}$ be $k$ disjoint prisms, and let $(T,\delta)$ be any \HCtree{} of $G$. We look at some node $t\in T$. Every subgraph $P_i[t]$ is split into two subgraphs $P_i[c_l]$ and $P_i[c_r]$, with some $r$ and $s$ vertices, respectively. Not every way to split one graph into two subgraphs with given numbers of vertices is equally good. The optimal split of $P_i[t]$ into subgraphs with $r$ and $s$ vertices, is simply the split that cuts the most edges.

\begin{remark} \label{obs:optsplit}
Let $G$ and $(T,\delta)$ as above. Let $t$ be an internal node in $T$ with children $c_l,c_r$, and assume that some $P_i[t]$ is split optimally. Furthermore, let $S_1,\ldots,S_6$ be the graphs depicted in Figure \ref{fig:optsub}. Whenever $P_i[t]=S_a$ for some $a$, then $P_i[c_l]=S_{a_l}$ and $P_i[c_r]=S_{a_r}$ for some $a_l,a_r$.
\end{remark}

\begin{proof}
It is not hard to verify via simple counting that the subgraphs $S_1,\ldots,S_6$ have the minimal number of edges among the subgraphs of the prism. Since there, for any $S_a,S_b$ with $a+b\leq 6$, exists a split of $S_{a+b}$ into $S_a$ and $S_b$, this split must cut more edges than any other split of $S_{a+b}$.

Obtaining an optimal split is thus a matter of simply switching around vertices between $P_i[c_l]$ and $P_i[c_r]$. Formally, switching vertices $u$ and $v$ in $G$ with respect to $(T,\delta)$ can be seen as an operation on $\delta$, yielding a new bijection $\delta'$ with the property that $\delta(u)=\delta'(v)$, $\delta(v)=\delta'(u)$, and for every vertex $w\neq u,v$, $\delta(w)=\delta'(w)$. This operation preserves the size of every subgraph of $G$ induced by $(T,\delta)$, therefore the only edges affected are the ones that lie on $u$ or $v$. We thus conclude that every split that cuts some $S_a$ optimally, cuts it into $S_{a_l},S_{a_r}$ for some $a_l,a_r$.
\end{proof}

\begin{figure}[ht!]
\centering
\includegraphics[width=0.8\textwidth]{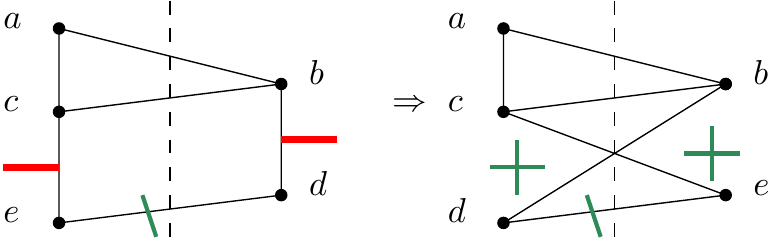}
\caption{In Cut Optimization, we obtain an optimal cut from a suboptimal one by switching two vertices, in this case $d$ and $e$. Note that $b$ and $c$ could also be used.}
\label{fig:bettersplit}
\end{figure}


\begin{lemma}\label{Lemma:safepart1}
For any node $t\in T$, Cut Optimization on $(T[t],\delta)$ is a safe operation.
\end{lemma}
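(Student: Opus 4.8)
The plan is to show that Cut Optimization, which repeatedly switches pairs of vertices to make each sub-prism split optimal (cutting the maximum possible number of edges), never decreases the DC-cost. The crux is that Cut Optimization is performed on the cluster $G[t]$ at node $t$ by redistributing vertices between $t$'s children, and crucially, because a vertex switch preserves the \emph{size} of every induced subgraph $G_{(T,\delta)}[s]$ for all $s$ (as noted in the proof of Remark \ref{obs:optsplit}), the only quantity that changes in the DC-cost formula $\DCcost(G,(T,\delta)) = \sum_{s} |V(G[s])| \cdot |E(G[c^s_l, c^s_r)]|$ is the edge count $|E(G[c^s_l, c^s_r)]|$ at various internal nodes $s$.

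First I would fix the node $t$ and a single sub-prism $P_i$ whose split $P_i[t] \to (P_i[c_l], P_i[c_r])$ is not yet optimal, and consider one switch of a pair of vertices $u,v$ (with $u$ currently in $T[c_l]$ and $v$ in $T[c_r]$, or both in one child) that moves this split strictly closer to optimal, as in Figure \ref{fig:bettersplit}. I would track how the DC-cost changes under this switch. The edges of $G$ affected are exactly those incident to $u$ or $v$. For an edge $xy$ with $x \in \{u,v\}$, its contribution to the cost is $|V(G[\lca(x,y)])|$; the switch can only change $\lca(x,y)$, and since sizes of all induced subgraphs are invariant, the change in contribution is a difference of sizes of subtrees on the $t$-to-root path versus subtrees inside $T[c_l]$ or $T[c_r]$. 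The key monotonicity point: an optimal split \emph{maximizes} the number of edges cut at $t$ itself, i.e. pushed up to be counted with the large factor $|V(G[t])|$ rather than counted deeper with a smaller factor; so moving an edge "up" to be cut at $t$ can only help. More carefully, since by Property \ref{Fact:EdgeDisjSum} the cost decomposes over the prisms, it suffices to argue the net change restricted to $P_i$'s edges is nonnegative; and since the switch only rearranges within $T[t]$, edges of $P_i$ not touched by the switch and cut strictly above $t$ are unaffected, while each edge incident to $u$ or $v$ either stays cut at the same node, or is moved to be cut at a node that is an ancestor of its previous lca within $T[t]$ — hence at a node with induced subgraph of size at least as large.

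A subtlety I would handle carefully is that one switch might worsen the split of some \emph{other} sub-prism $P_j$, or move some of $P_j$'s edges downward. To avoid this I would choose the switching strategy so that at each step the switch is applied to vertices that only affect the single prism being fixed — which is possible because the $P_j$ are vertex-disjoint, so switching two vertices of $P_i$ changes only edges of $P_i$. Then the cost contributions of all $P_j$ with $j \neq i$ are literally unchanged by Property \ref{Fact:EdgeDisjSum}, and I only need the per-prism monotonicity argument above. Iterating, after finitely many switches every sub-prism at $t$ is split optimally, so by Remark \ref{obs:optsplit} every $P_i[t]$ that is one of $S_1,\dots,S_6$ is split into two of the $S_a$'s; and since $P_i[t]$ is itself obtained by an optimal split at $t$'s parent (we proceed top-down), it is one of $S_1,\dots,S_6$, so the invariant propagates. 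Finally, to conclude Cut Optimization on $(T[t],\delta)$ — the whole subtree rooted at $t$, not just node $t$ — is safe, I would simply note it is the composition of these safe per-node, per-prism switches applied throughout $T[t]$, and composition of safe operations is safe.

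The main obstacle I expect is the bookkeeping in the per-switch cost analysis: precisely identifying, for each edge incident to a switched vertex, whether its lca moves and in which direction, and verifying that "moves toward optimal split" translates cleanly into "lca moves to an ancestor, hence a weakly larger induced subgraph." Getting the quantifiers right — that we can always find a switch that strictly improves the split without ever creating a downward move of any edge — is the delicate part; everything else is routine once the invariance of subgraph sizes under switching is exploited.
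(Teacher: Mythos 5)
Your overall architecture (per-node, per-prism vertex switches; invariance of all subtree sizes under a switch; vertex-disjointness of the prisms isolating each switch's effect) matches the paper's setup, but the step you yourself flag as ``the delicate part'' is exactly where the argument breaks, and it cannot be repaired in the form you propose. It is \emph{not} true that every edge incident to a switched vertex has its lca move weakly upward, and it is \emph{not} always possible to choose an improving switch for which no edge moves downward. Concretely, write the prism as two triangles $\{a,b,c\}$ and $\{a',b',c'\}$ with matching edges $aa',bb',cc'$, and suppose the current split at $t$ is $P_i[c_l]=\{a,b,a'\}$, $P_i[c_r]=\{c,b',c'\}$. This cuts $5$ edges, whereas the optimal $(3,3)$ split cuts $7$; a short case check shows the optimal $(3,3)$ splits of the prism are exactly those whose two uncut edges form one of the pairs $\{ab,a'b'\}$, $\{bc,b'c'\}$, $\{ca,c'a'\}$. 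None of these pairs is contained in the current uncut set $\{ab,aa',cc',b'c'\}$, so \emph{every} way of reaching an optimal split --- by one switch or by any sequence of switches --- forces some edge that is currently cut at $t$ (for instance $a'b'$ under the switch $a'\leftrightarrow c'$) to end up with both endpoints in the same child, i.e.\ its lca moves strictly \emph{downward}. Hence the per-edge monotonicity on which your proof rests is false, and the quantifier you hoped to arrange (``a strictly improving switch that never moves an edge down'') does not exist.

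What the statement actually requires is a \emph{net} accounting rather than edge-by-edge monotonicity: one must use that the optimal split maximizes the number of edges receiving the largest coefficient $|V(G[t])|$ while the subgraphs $S_1,\ldots,S_6$ are precisely the vertex-count-wise minimizers of the number of edges left uncut on each side (this is the content of Remark~\ref{obs:optsplit}, and is the route the paper takes), and then compare the total gain of the promoted edges against the total loss of the (strictly fewer) demoted edges, exploiting that any edge cut strictly below $t$ receives coefficient at most $\max(|V(G[c_l])|,|V(G[c_r])|)<|V(G[t])|$. Your proposal never performs this comparison --- indeed it explicitly tries to avoid needing it --- so as written it does not establish safeness even for a single switch. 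The rest of your plan (disjointness of the $P_j$, propagation of the $S_a$ invariant top-down via Remark~\ref{obs:optsplit}, composing safe operations over $T[t]$) is fine and agrees with the paper.
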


\begin{proof}
From the proof of Remark \ref{obs:optsplit}, we see that for all $P_i[t]$ that is isomorphic to some $S_a$, performing Cut Optimization is a safe operation, as it never decreases the DC-cost of $(T,\delta)$. Now, note that we perform this operation on each node of $T$ in top-down fashion. At the root of $T$, $r$, we have that for every $1\leq i\leq k$, $P_i[r]=P=S_6$, so the operation is safe on $r$. At any other node $t$, we have already optimized the cuts in $u$, the parent of $t$. By Remark \ref{obs:optsplit}, we again have that for every $1\leq i\leq k$, there exists some $a$ such that $P_i[t]=S_a$. Therefore, the operation also is safe on every other node of $T$.
\end{proof}


\begin{figure}[ht!]
\centering
\includegraphics[width=0.8\textwidth]{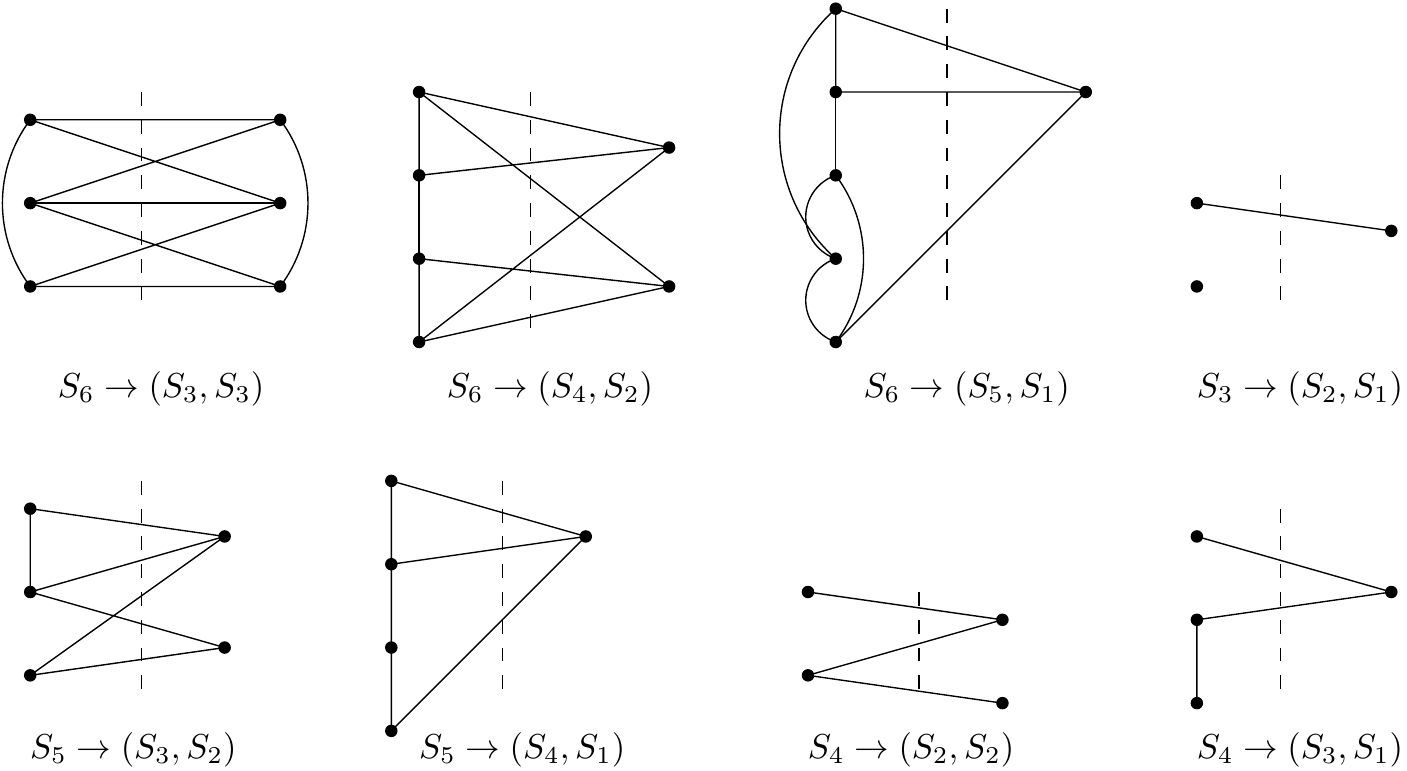}
\caption{After Cut Optimization, every split of sub-prisms that cuts at least one edge is one of the splits shown here or its mirror image. After Left Heavy the mirror images no longer appear.}
\label{fig:optsplit}
\end{figure}

\subsection{Left-Heavy Distribution}\label{Sct:LeftHeavy}

Now we show that also Left-Heavy Distribution is a safe operation on each node. This step is performed after Cut Optimization, therefore we can assume every split in the \HCtree{} is an optimal one. Furthermore, since this step is done in tandem with the Balancing step, on each node before moving up to its parent, we can assume that when performing Left-Heavy Distribution on some node $t$ in $T$ with children $c_l$ and $c_r$, then $T[c_l]$ and $T[c_r]$ are already fully normalized.

The goal of the second step, Left-Heavy Distribution, is to ensure that for every $i$, $|P_i[c_l]|\geq|P_i[c_r]|$. The intuition behind this step is clear: if we first split one component unevenly, we would expect more uncut edges in the big part than in the small part. Indeed, this is true for the subgraphs $S_1,\ldots,S_6$; $S_a$ does not have more edges than $S_{a+1}$ for any $a \in \{1,\ldots,5\}$. Splitting all components unevenly with the big part on the same side, we give more weight to these remaining edges when they are cut, further down in $T$.

We begin by dividing $G[t]$ into two pieces, $G[t]^L$ and $G[t]^R$. $G[t]^L$ is the union of all those $P_i[t]$ for which $|P_i[c_l]|\geq|P_i[c_r]|$ (the left-heavily split subgraphs), while $G[t]^R$ is the union of all those $P_i[t]$ for which $|P_i[c_l]|<|P_i[c_r]|$ (the right-heavily split subgraphs). $G[t]^L$ and $G[t]^R$ are clearly disjoint, since every connected subgraph lies wholly within one of these parts. We make a couple of observations about these two subgraphs:

\begin{remark}\label{Rmk:RightSideEdges}
Every edge in $G[c_l]$ is also in $G[t]^L$, and every edge in $G[c_r]$ except those arising from (3-3)-splits is also in $G[c_r]$.
\end{remark}

\begin{proof}
We begin looking at $G[c_l]$: As we have performed Cut Optimization on the \HCtree{}, we can assume that $P_i[c_l]$ is isomorphic to $S_{a_l}$ for some $a_l\in\{0,\ldots,6\}$ for every $i$, and equivalently every $P_i[c_r]$ is isomorphic to some $S_{a_r}$. Now, for any $P_i[t]$, if this subgraph has been put into $G[t]^R$ it is because it has been split right-heavily, i.e. $a_l<a_r$. Since $a_l+a_r$ is at most 6, is follows that $a_l$ is at most 2. But the optimal subsets of the prism that contain edges all have at least 3 vertices, therefore $P_i[t]$ cannot contain any edges.

The proof for $G[c_r]$ is roughly equivalent to the one above, but we have to factor in that there can exist some $P_i[c_r]$ in $G[t]^L$ that is isomorphic to $S_3$. If this is the case, then we know that $P_i[c_l]$ also must be isomorphic to $S_3$, therefore $P_i[t]$ is a prism that is split (3-3)-wise.
\end{proof}

\begin{remark} \label{obs:SubgraphIndSet}
Let $(T,\delta)$ be a  \HCtree{}, and $t$ a node with children $c_l,c_r$. We give the children of $c_l$ and $c_r$ names $l_1,l_2$ and $r_1,r_2$ respectively. Furthermore, we give the children of these 4 nodes names $x_1,x_2$, $x_3,x_4$, $y_1,y_2$ and $y_3,y_4$ respectively. If $T[c_l]$ and $T[c_r]$ are fully normalized, then for every $i\in\{1,\ldots,4\}$, $G[x_i]$ and $G[y_i]$ have no edges.
\end{remark}

\begin{proof}
Assume that $T[c_l]$ and $T[c_r]$ are fully normalized. By definition, we know that all the subgraphs in $G[c_l]$ and $G[c_r]$ have been split optimally as balanced as possible. This means that all the subgraphs in $G[l_1]$, $G[l_2]$, $G[r_1]$ and $G[r_2]$ have at most 3 vertices. These subgraphs are also split optimally and balanced. This means that for any $T[x_i]$ or $T[y_i]$, every subgraph is isomorphic to either of $\emptyset, S_1, s_2$ and thus have no edges.
\end{proof}

When explaining the operation, we assume that the nodes have the same names as in Remark \ref{obs:SubgraphIndSet}. From here, we identify the nodes that are children of $l_1$, $l_2$, $r_1$ and $r_2$. We then switch around all the subgraphs that are split right-heavy, so they become left-heavy split. Figure \ref{fig:leftheavy} shows this operation. Specifically, we modify $(T,\delta)$ into $(T',\delta')$ such that for each pair of nodes $x_i,y_i\in T'$, we have
$$G_{(T',\delta')}[x_i] = (G_{(T,\delta)}[x_i]\cap G[t]^L)\cup(G_{(T',\delta')}[y_i]\cap G[t]^R)$$
$$G_{(T',\delta')}[y_i] = (G_{(T,\delta)}[x_i]\cap G[t]^R)\cup(G_{(T',\delta')}[y_i]\cap G[t]^L)$$

\begin{figure}[H]
\centering
\includegraphics[width=\textwidth]{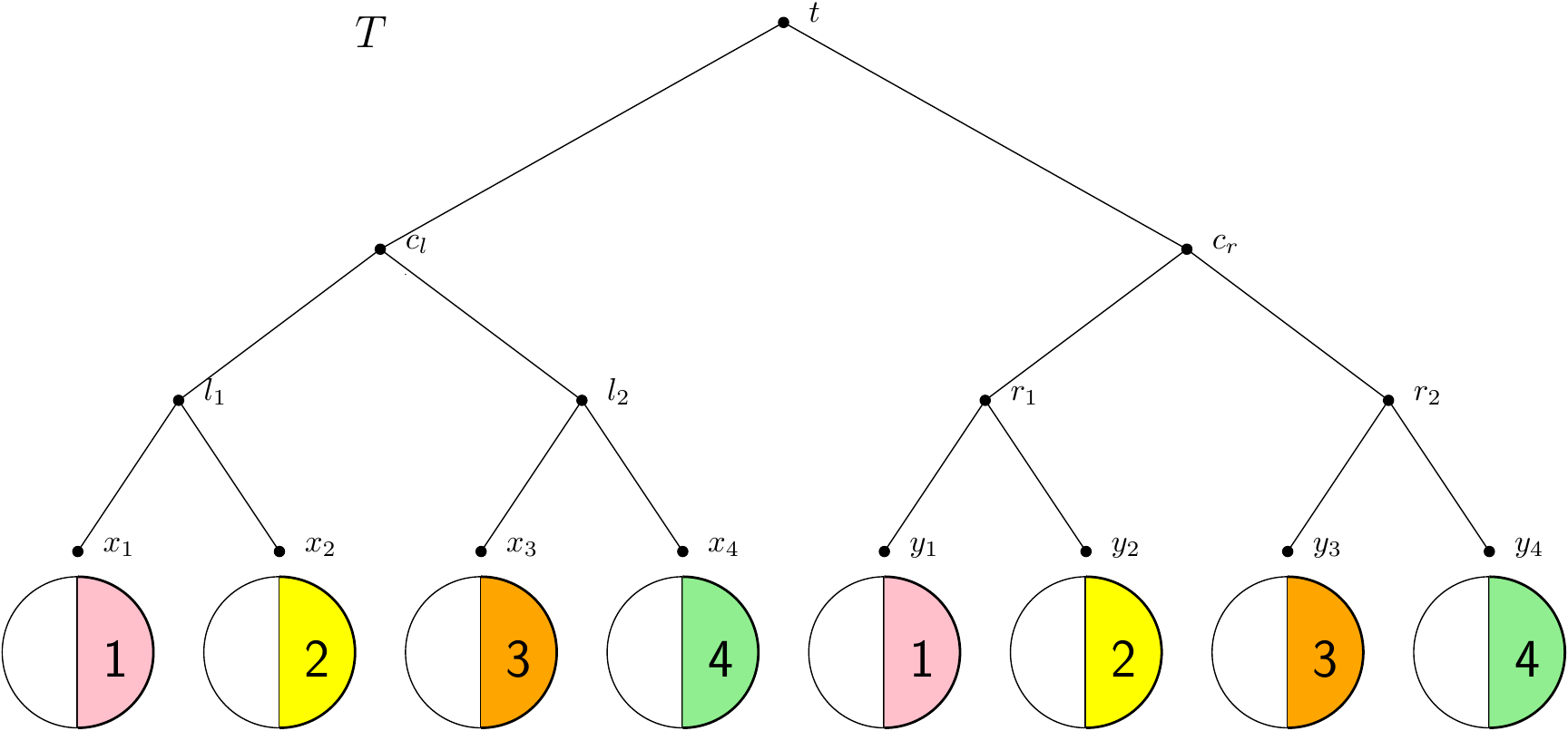}
\caption{The circles beneath each node $x_i$ (or $y_i$) represents $G_{(T,\delta)}[x_i]$ (or $G_{(T,\delta)}[y_i]$); the colored halves represent the sub-prisms that are right-heavily split at $t$, i.e.  the union of all those $P_i[t]$ for which $|P_i[c_l]|<|P_i[c_r]|$. In the Left-Heavy Distribution operation, we switch each two colored parts with the same number.}
\label{fig:leftheavy}
\end{figure}

\begin{lemma}\label{Lemma:safepart2}
Left-Heavy Distribution on any node $t$ is a safe operation.
\end{lemma}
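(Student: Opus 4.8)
The plan is to prove the statement exactly as worded, namely that Left-Heavy Distribution is \emph{safe for maximization}: per the definition above this means the output weakly dominates the input, so I must establish $\DCcost(G,(T,\delta)) \le \DCcost(G,(T',\delta'))$, i.e. the operation never decreases the DC-cost. First I would localize where the cost can change. The operation only permutes vertices strictly inside $T[t]$, at the $x_i/y_i$ level, so the vertex set $\delta^{-1}(L(T[t]))$ is preserved and every edge whose \lca{} lies at $t$ or above keeps both its cut-level and its weight; these contribute identically to both trees. Moreover, for each flipped (right-heavy) prism the operation merely swaps the roles of $P_i[c_l]$ and $P_i[c_r]$, so the bipartition of $P_i[t]$ across $\{c_l,c_r\}$ is unchanged as a set-partition. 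Hence $|E(G[c_l,c_r])|$ (summed over the vertex-disjoint prisms) and $|V(G[t])|$ are both unchanged, so the cost charged \emph{at} $t$ is also identical. It therefore suffices to compare the cost contributed by edges cut strictly inside $T[c_l]$ and strictly inside $T[c_r]$.

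Next I would pin down the finite structure of that below-$t$ cost. Since Cut Optimization has been performed and, by the bottom-up order, $T[c_l]$ and $T[c_r]$ are fully normalized, Observation~\ref{obs:SubgraphIndSet} gives that $G[x_i]$ and $G[y_i]$ are edgeless; thus every edge cut inside $T[c_l]$ (resp. $T[c_r]$) is cut at one of the three nodes $c_l,l_1,l_2$ (resp. $c_r,r_1,r_2$). For each sub-prism type $S_a$ I would record, at each of the three levels $p\in\{0,1,2\}$ (top split, left child, right child), its vertex contribution $\sigma_p(a)$ and its cut-edge contribution $\epsilon_p(a)$; a direct count over the normalized splits $S_6{\to}(S_3,S_3)$, $S_5{\to}(S_3,S_2)$, $S_4{\to}(S_2,S_2)$, $S_3{\to}(S_2,S_1)$ shows both $\sigma_p$ and $\epsilon_p$ are non-decreasing in $a$, and that the sub-prism with more vertices dominates the other in all six of these quantities. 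Writing $L,R$ for the multisets of left and right parts and $\Sigma_p,\mathcal{E}_p$ for the corresponding summed contributions, the below-$t$ cost equals $\sum_{p}\big(\Sigma_p(L)\mathcal{E}_p(L)+\Sigma_p(R)\mathcal{E}_p(R)\big)$, a symmetric sum of products over the two sides.

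The core step is then a short rearrangement computation. Fixing a level $p$ and using that $\Sigma_p(L)+\Sigma_p(R)=:S_p$ and $\mathcal{E}_p(L)+\mathcal{E}_p(R)=:E_p$ are orientation-independent totals, I write $\Sigma_p(L)=\tfrac12S_p+x_p$ and $\mathcal{E}_p(L)=\tfrac12E_p+y_p$ and simplify to obtain $\Sigma_p(L)\mathcal{E}_p(L)+\Sigma_p(R)\mathcal{E}_p(R)=\tfrac12S_pE_p+2x_py_p$, whose only orientation-dependent part is $2x_py_p$. Encoding each prism's orientation by a single sign $s_i=+1$ (larger part on the left) or $-1$, monotonicity gives $x_p=\tfrac12\sum_i s_i d^{\sigma}_{p,i}$ and $y_p=\tfrac12\sum_i s_i d^{\epsilon}_{p,i}$ with all $d^{\sigma}_{p,i},d^{\epsilon}_{p,i}\ge0$, and crucially the \emph{same} sign vector $s$ serves all three levels. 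Since $\big|\sum_i s_i d_i\big|\le\sum_i d_i$ with equality at $s\equiv\mathbf{1}$, each product $x_py_p$ is simultaneously maximized by the all-left-heavy orientation $s\equiv\mathbf{1}$, hence so is $\sum_p 2x_py_p$ and thus the below-$t$ cost. Because the operation outputs exactly this all-left-heavy orientation while the input is an arbitrary one, the below-$t$ cost does not decrease; combined with the first paragraph this yields $\DCcost(G,(T,\delta))\le\DCcost(G,(T',\delta'))$, proving Left-Heavy is safe.

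The main obstacle I anticipate is the coupling introduced by the subtree-size weights: flipping a single right-heavy prism changes the sizes $N_{c_l},N_{l_1},N_{l_2}$ (and symmetrically on the right) and hence re-weights \emph{every} other prism's cut edges, so a naive ``one flip at a time, each safe'' argument is not self-evidently valid (the side excluding the flipped prism need not be the heavier one). The device that dissolves this difficulty is the all-at-once reduction to $\sum_p 2x_py_p$ together with the fact that the three levels share a single orientation bit per prism; this turns the global, weight-coupled comparison into the one-line sign-vector inequality above. Verifying carefully that $\sigma_p$ and $\epsilon_p$ are genuinely comonotone in $a$, and that the swap at the $x_i/y_i$ level keeps each relocated part in its normalized position (so that the $\Sigma_p,\mathcal{E}_p$ bookkeeping is exact), is where the real work lies.
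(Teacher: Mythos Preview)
Your argument is correct but follows a genuinely different route from the paper. The paper argues edge-by-edge: after the swap, every edge that was cut inside $T[c_l]$ or that moves from $T[c_r]$ to $T'[c_l]$ now sits in a subgraph at least as large as before (this uses Remark~\ref{Rmk:RightSideEdges} to confine all left-side edges to left-heavy prisms, and a direct size comparison for the moved right-heavy edges). The only edges that can lose weight are those of the right-side $S_3$ in an $(S_3,S_3)$ split; their loss is exactly offset by the symmetric gain of the matching left-side $S_3$, since $|G'[c_l]|-|G[c_l]| = |G[c_r]|-|G'[c_r]|$. So the paper handles one exceptional case by a matching argument and everything else by monotonicity of a single size.

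You instead recast the entire below-$t$ cost as $\sum_p\bigl(\tfrac12 S_pE_p + 2x_py_p\bigr)$ and maximize all three $x_py_p$ simultaneously via the sign-vector bound $|\sum_i s_i d_i|\le\sum_i d_i$, which holds because the same orientation bit $s_i$ governs all levels. This buys you a uniform treatment with no special-casing of $(S_3,S_3)$ splits; the cancellation the paper does by hand is absorbed into the identity $d^{\sigma}_{p,i}=d^{\epsilon}_{p,i}=0$ for balanced splits. The price is that you must verify the comonotonicity of $\sigma_p$ and $\epsilon_p$ across all six sub-prism sizes (which does hold: e.g.\ $\sigma_1$ runs $0,1,1,2,2,3,3$ and $\epsilon_1$ runs $0,0,0,0,0,2,2$ over $S_0,\ldots,S_6$), whereas the paper only needs the cruder fact that sub-prisms with at most two vertices carry no edges. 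Your approach is arguably cleaner and would transfer more readily to other candidate well-behaved graphs, while the paper's is shorter given that Remark~\ref{Rmk:RightSideEdges} is already in hand.
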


\begin{proof}

As implied by Remark \ref{obs:SubgraphIndSet}, none of the subgraphs $G[x_i]$ or $G[y_i]$ have any edges. This means that for every $i$, any \HCtree{} of $G_{(T',\delta')}[x_i]$ or $G_{(T',\delta')}[y_i]$ has DC-cost zero. When this step is done, every edge in $G[t]$ is cut at one of the nodes $t$, $c_l$, $c_r$, $l_1$ or $l_2$. It is also evident that every edge is cut in a subgraph that is at least as big in $T'$ as it was in $T$, except the edges in $c_r$. Following Remark \ref{Rmk:RightSideEdges}, these edges must necessarily follow from a $S_6\rightarrow(S_3,S_3)$ split at $t$. The decrease in cost for these edges are therefore matched by the increase in cost for the other $S_3$ that is split at $c_l$. It follows that $(T',\delta')$ has at least as high DC-cost as $(T,\delta)$.
Note that every subgraph in $T'[c_l]$ and $T'[c_r]$ is still fully normalized, since they are split the same way as before.
\end{proof}

\subsection{Balancing the \HCtree{}}\label{Sct:Balance}

Let $t$ be a node of \HCtree{} $(T,\delta)$ on which we have just performed Left-Heavy Distribution. 
This means that every split at a node $t$ is optimal and left-heavy, and also that we have performed Balancing on both its children $c_l,c_r$, so that $T[c_l],T[c_r]$ are both fully normalized. 
In the Balancing step we fully normalize $T[t]$.
Since splits at the children are left-heavy, there are 12 possible splits of sub-prisms at $t$ before we perform Balancing.  These are the 8 in Figure \ref{fig:optsplit} plus 4 not cutting any edge. 4 of these 12 (the first 4 in below) are as even as possible, while 8 are uneven. 

\begin{minipage}[ht!]{0.5\textwidth}
\begin{itemize}
\item $a$ splits of type $S_6\rightarrow(S_3,S_3)$
\item $b$ splits of type $S_5\rightarrow(S_3,S_2)$
\item $c$ splits of type $S_4\rightarrow(S_2,S_2)$
\item $d$ splits of type $S_3\rightarrow(S_2,S_1)$
\item $a'$ splits of type $S_6\rightarrow(S_6,\emptyset)$
\item $b'$ splits of type $S_6\rightarrow(S_5,S_1)$
\end{itemize}
\end{minipage}
\begin{minipage}[ht!]{0.5\textwidth}
\begin{itemize}
\item $c'$ splits of type $S_6\rightarrow(S_4,S_2)$
\item $d'$ splits of type $S_5\rightarrow(S_5,\emptyset)$
\item $e'$ splits of type $S_5\rightarrow(S_4,S_1)$
\item $f'$ splits of type $S_4\rightarrow(S_4,\emptyset)$
\item $g'$ splits of type $S_4\rightarrow(S_3,S_1)$
\item $h'$ splits of type $S_3\rightarrow(S_3,\emptyset)$\\
\end{itemize}
\end{minipage}

The Balancing step is done as follows: Each uneven split of a sub-prism is modified into the unique even split on the same sub-prism, by way of moving some vertices from the left side over to the right side. Figure \ref{fig:mod} shows the details of this operation. In the resulting HC-tree, the sub-prisms are not necessarily split left-heavily in $c_l$ or $c_r$ anymore. This does not affect the cost, as these nodes are the lowest that cut edges. We still flip the left and right side of these sub-prisms to guarantee the behavior of performing Left-Heavy distribution on the parent of $t$.

As an example of this type of modification, consider a sub-prism that is split $S_5\rightarrow(S_4,S_1)$ before the modification. We will modify it into $S_5\rightarrow(S_3,S_2)$. In this case, we move one single vertex from the left side to the right side. To optimize the split, we must pick the one vertex that is not adjacent to the vertex already lying on the right side. However, note that these movements of vertices from left subtree to right subtree affect also the cost of edges belonging to even splits, and thus Figure \ref{fig:mod} shows also the effects on even splits.

For every possible split, we have denoted the number of sub-prisms that are split this way at $t$ with a letter as shown above, where the letters $a$ to $d$ are reserved for even splits and ticked letters $a'$ through $h'$ are reserved for uneven splits.

From Remark \ref{obs:SubgraphIndSet}, we know that before the Balancing step at $t$, every edge in $G[t]$ is cut at one of the nodes $t$, $c_l$, $c_r$, $l_1$ and $l_2$ (where the nodes are named as in Figure \ref{fig:leftheavy}). After the modification, every edge in $G[t]$ is cut at one of the nodes $t$, $c_l$ and $c_r$ in $(T', \delta')$. How much is gained and lost for each type of split is shown in Figure \ref{fig:mod}.

\begin{figure}[ht!]
\centering
\includegraphics[width=\textwidth]{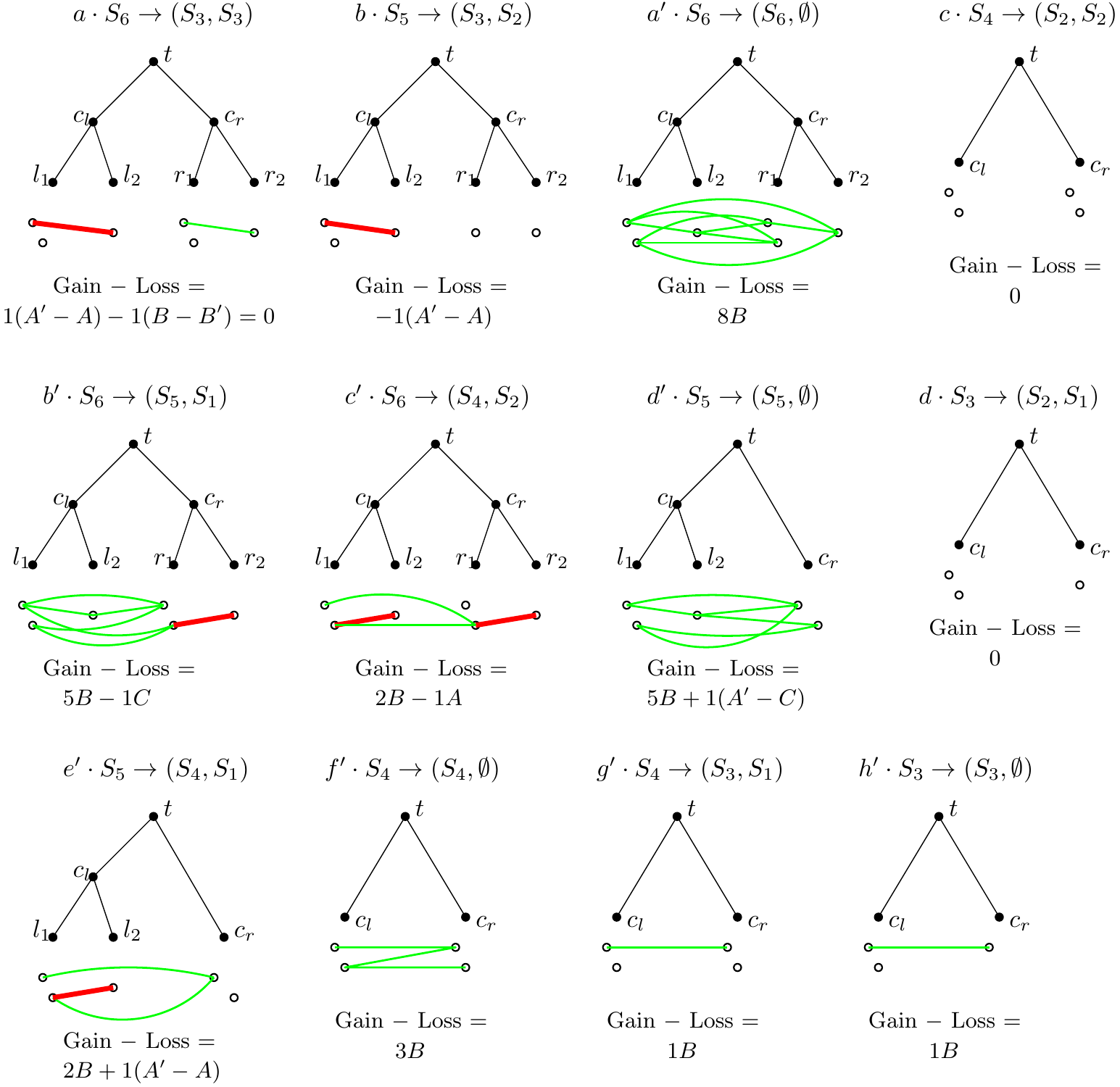}
\caption{This figure shows every type of split that gets some edges modified in the Balancing step, \emph{after} the modification. Green edges have gained cost and red edges have lost cost. Edges whose cost do not change are not shown.}
\label{fig:mod}
\end{figure}

\begin{lemma}\label{Lemma:safepart3}
In the bottom-up traversal the Balancing operations collectively contribute to making this bottom-up traversal a safe operation.
\end{lemma}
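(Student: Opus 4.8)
The plan is to prove Lemma~\ref{Lemma:safepart3} by a global accounting argument over the whole bottom-up traversal, rather than node by node, since an individual Balancing step at a node $t$ need \emph{not} be safe in isolation: moving vertices from the left subtree to the right subtree to even out an uneven split can lower the cost of some edges cut inside $c_l$ (for instance edges that were being counted in the larger subtree $G[l_1]$). The key idea is that every such local loss is compensated by a gain that is realized either at the same node or \emph{higher up} the tree, once a subsequent Balancing (or the Left-Heavy step) at an ancestor moves the now-balanced sub-prism pieces into a larger cluster. So the natural strategy is to set up a potential/amortization argument: to each edge $e$ of $G = P^{(k)}$ assign a ``target level'' equal to the size of the cluster in which $e$ is cut in the final factorized \HCtree{}, and show that the total DC-cost is non-decreasing across the entire bottom-up sweep by tracking, for each edge, the monotone (or amortized-monotone) evolution of the size of its cutting cluster.

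Concretely, I would proceed as follows. First, fix a node $t$ with children $c_l, c_r$ on which Left-Heavy has just been performed, so the 12 split types listed before the lemma are the only ones that occur, and Figure~\ref{fig:mod} records exactly which edges change cost and by how much when each uneven split is straightened to its even counterpart. Second, for each of the eight ticked split types $a', \dots, h'$, I would write down the net change $\Delta$ in DC-cost contributed by one such sub-prism, being careful to include the ``side effects'' on edges of the \emph{even} sub-prisms sitting at the same node (since all sub-prisms at $t$ get their left/right boundaries shuffled when vertices are pushed rightward). Using Remark~\ref{obs:SubgraphIndSet}, before Balancing at $t$ every edge of $G[t]$ is cut at $t$, $c_l$, $c_r$, $l_1$, or $l_2$, and afterwards only at $t$, $c_l$, $c_r$; so the bookkeeping is finite and local to these five/three nodes. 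Third, I would sum these $\Delta$'s over all $a,b,c,d,a',\dots,h'$ sub-prisms at $t$ and argue that the aggregate is $\geq 0$ — the point being that the edges which \emph{lose} cost at $t$ are precisely those that, by Left-Heavy and Remark~\ref{Rmk:RightSideEdges}, came from $(3,3)$-type or otherwise ``already-high'' splits, and their loss is dominated by the cost increase of the symmetric sub-prism pieces that move up. Finally, chaining this inequality over the whole bottom-up traversal (the top-down Cut Optimization sweep being already handled by Lemma~\ref{Lemma:safepart1} and the Left-Heavy sweep by Lemma~\ref{Lemma:safepart2}), and invoking Property~\ref{Fact:ExistsProc}, gives that the output is the factorized \HCtree{} of cost $48m^2$ per prism-collection, as needed by Lemma~\ref{Lemma:PrismMax}.

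The main obstacle I anticipate is the third step: the \emph{joint} sign analysis of the eight uneven split contributions together with their interference on the even splits. Because pushing a vertex rightward at $t$ simultaneously enlarges one child-cluster's contribution and shrinks another's, the per-split $\Delta$ values are not individually non-negative, and a crude triangle-inequality bound will not close. I expect the argument to require a clever pairing — e.g. matching each uneven split of type $X'$ with the ``partner'' even or uneven split it was derived from in a previous step, or bounding $\sum \Delta$ below by a telescoping expression in the counts $a,b,c,d$ — so that the losses cancel exactly against the gains. A secondary subtlety is the claim in Lemma~\ref{Lemma:safepart2}'s proof, reused here, that after Balancing the sub-prisms in $T'[c_l], T'[c_r]$ remain fully normalized despite the left/right flips; I would state explicitly that flipping a fully-normalized subtree's two children preserves full normalization (since ``fully normalized'' only constrains the \emph{multiset} of split types, not their orientation), so that the inductive hypothesis for the parent of $t$ is maintained. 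With the pairing lemma in hand, the rest is the routine arithmetic already tabulated in Figure~\ref{fig:mod}.
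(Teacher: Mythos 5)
Your high-level framing matches the paper's strategy in spirit: you correctly anticipate that an individual Balancing step at a node $t$ need not be safe in isolation and that losses must be amortized against gains realized higher up. However, your concrete plan contains a step that is false, and the quantitative content that actually closes the argument is missing. In your third step you propose to show that the aggregate change at a single node $t$ is $\geq 0$; this fails. The paper computes the net gain at $t$ as a quadratic form in the twelve counts $a,\dots,d,a',\dots,h'$ (after substituting the explicit cluster sizes $A,A',B,B'$ before and after Balancing), and two of the resulting coefficients are negative: the pairs $b'h'$ and $c'h'$ each carry coefficient $-1$ (Table~\ref{tab}). So if $G[t]$ consists only of sub-prisms split $S_6\rightarrow(S_4,S_2)$ and $S_3\rightarrow(S_3,\emptyset)$, Balancing at $t$ strictly \emph{decreases} the cost, and no per-node inequality can be chained over the traversal.

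The missing idea --- which you flag as your ``main obstacle'' but do not resolve --- is the specific amortization for these two bad pairs: one must track where the two offending sub-prisms are split at the ancestors of $t$. The paper observes that at every ancestor the $c'$ sub-prism appears as an $S_6\rightarrow(S_6,\emptyset)$ split (type $a'$), while the $h'$ sub-prism must at some ancestor enter an even split of type $S_4\rightarrow(S_3,S_1)$, $S_5\rightarrow(S_3,S_2)$, or $S_6\rightarrow(S_3,S_3)$; the coefficients of the corresponding pairs in the quadratic form are $8$, $13$, and $24$, all exceeding the loss of $1$ incurred at $t$ (and similarly for $b'h'$). Summed over the whole bottom-up traversal, every \emph{pair of sub-prisms} therefore contributes non-negatively, which is exactly what makes the Balancing operations collectively safe even though single calls are not. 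Your guessed ``clever pairing'' points in roughly the right direction, but the actual pairing is between the negatively-weighted pairs at $t$ and the split types of those same two sub-prisms at ancestors of $t$, not between an uneven split and the split it was derived from; without the coefficient computation and this ancestor argument the proof does not go through.
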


\begin{proof}
Assume Balancing has been performed at a node $t$ as explained above, with the letters $a,...,d,a',...h'$ denoting the number of sub-prisms before the Balancing of each of the 12 types. 
To calculate the change in cost, we must look at the sizes of subgraphs of $G[t]$, with $A$ the number of leaves of the subtree rooted at left child before Balancing at $t$ and $A'$ this number after the balancing at $t$, and similarly for $B,B',C$ (remember that $(T,\delta)$ is the tree before this step and $(T',\delta')$ is the modified \HCtree{}):

\begin{itemize}
\item $A := |G_{(T,\delta)}[c_l]| = 6(a')+5(b'+d')+4(c'+e'+f')+3(a+b+g'+h')+2(c+d)$
\item $A' := |G_{(T',\delta')}[c_l]| = 3(a+b+a'+b'+c'+d'+e')+2(c+d+f'+g'+h')$
\item $B := |G_{(T,\delta)}[c_r]| = 3(a)+2(b+c+c')+1(d+e+b'+e'+g')$
\item $B' := |G_{(T',\delta')}[c_r]| = 3(a+a'+b'+c')+2(b+c+d'+e'+f'+g')+1(d+h')$
\item $C := |G_{(T,\delta)}[l_1]| \leq 3(a'+b'+d')+2(a+b+c'+e'+f'+g'+h'+c+d)$
\item $N := |G[t]| = A+B = A'+B'$
\end{itemize}

Back to our example, we see in Figure \ref{fig:mod} that in each of the $e'$ sub-prisms that used to be split $S_5 \rightarrow (S_4, S_1)$ there are 3 edges that have their cost changed, for two of them a gain of $B=(A+B)-A$ since these edges used to be on the left side but are now cut at $t$, while one edge incurs a loss of $A-A'$ since the left side has shrunk in size. The net gain (Gain minus Loss) for these $e'$ sub-prisms is thus $e'(2B-A+A')$.

The net gain for all sub-prisms split at $t$ is found by summing in a similar way the net gain for all the 12 cases. Into this total net gain we now plug the definitions of $A,A',B,B',C,N$ given above, to get a large sum of products of pairs of the variables $a,...,d,a',...,h'$. After a simple, but tedious reorganizing of this sum each pair will be multiplied by a coefficient in this total net gain; these coefficients are shown in Table \ref{tab}.

In this sum, every coefficient is non-negative, except for two terms: $-b'h'$ and $-c'h'$. This means that if $G[t]$ consists of only $S_6\rightarrow(S_4,S_2)$'s (denoted by $c'$) and $S_3\rightarrow(S_3,\emptyset)$'s (denoted by $h'$), then the modified $(T',\delta')$ actually has \emph{lower} DC-cost than the original $(T,\delta)$. In other words, not every call to Balancing will be safe. But in every ancestor of $t$, the $c'$ $S_6\rightarrow(S_4,S_2)$'s are $S_6\rightarrow(S_6,\emptyset)$'s, and the $h'$ $S_3\rightarrow(S_3,\emptyset)$'s will at some ancestor be involved in one of $S_4\rightarrow(S_3,S_1)$, $S_5\rightarrow(S_3,S_2)$ or $S_6\rightarrow(S_3,S_3)$. The coefficients for these combinations in the sum are 8, 13 and 24, respectively. Therefore, even when including these combinations of sub-prisms, the cost for these sub-prisms must increase more at the ancestors of $t$ than it decreases at $t$. The same argument can be put forward for the combination $-b'h'$. This implies that no pair of sub-prisms contributes a lower DC-cost in the finished, factorized HC-tree than at the start of the bottom-up traversal.
\end{proof}

\begin{table}[ht!]
\centering
\begin{tabular}{|c||r|r|r|r|r|r|r|r|}
    \hline
     & $a'$ & $b'$ & $c'$ & $d'$ & $e'$ & $f'$ & $g'$ & $h'$ \\
     \hline \hline
    $a$ & 24 & 13 & 3 & 16 & 6 & 9 & 3 & 3 \\
    \hline
    $b$ & 13 & 6 & 0 & 9 & 3 & 4 & 1 & 1 \\
    \hline
    $c$ & 16 & 8 & 2 & 10 & 4 & 6 & 2 & 2 \\
    \hline
    $d$ & 8 & 3 & 0 & 5 & 2 & 3 & 1 & 1 \\
    \hline
    $a'$ & 0 & 5 & 10 & 0 & 5 & 0 & 8 & 0 \\
    \hline
    $b'$ & x & 2 & 5 & 2 & 3 & 1 & 4 & \textcolor{red}{-1} \\
    \hline
    $c'$ & x & x & 0 & 6 & 1 & 2 & 1 & \textcolor{red}{-1} \\
    \hline
    $d'$ & x & x & x & 0 & 4 & 0 & 5 & 0 \\
    \hline
    $e'$ & x & x & x & x & 1 & 1 & 2 & 0 \\
    \hline
    $f'$ & x & x & x & x & x & 0 & 3 & 0 \\
    \hline
    $g'$ & x & x & x & x & x & x & 1 & 1 \\
    \hline
    $h'$ & x & x & x & x & x & x & x & 0 \\
    \hline
\end{tabular}
\caption{The coefficients associated with each pair of variables, in the formula for net gain after modification of the \HCtree{} $(T[t],\delta)$. That is, net gain is equal to $24aa'+13ab'+\ldots+1g'h'+0h'h'$. Note the two negative numbers.}
\label{tab}
\end{table}

\begin{lemma}\label{Lemma:allsafe}
The top-down traversal of $(T,\delta)$ in which Cut Optimization is performed is a safe operation. The bottom-up traversal of $(T,\delta)$ in which Left-Heavy Distribution and Balancing is performed is a safe operation.
\end{lemma}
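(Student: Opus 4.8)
The plan has one half for each sentence of the statement, since this is the point where Lemmas~\ref{Lemma:safepart1}, \ref{Lemma:safepart2} and~\ref{Lemma:safepart3} get assembled. For the top-down traversal I would argue by induction on the top-down order of the nodes that every call to Cut Optimization is applied to a node all of whose sub-prisms $P_i[t]$ are isomorphic to some $S_a$; this is exactly the situation in which Lemma~\ref{Lemma:safepart1} certifies the operation safe. The base case is the root, where every $P_i[r]$ is the whole prism $S_6$, and the inductive step is Remark~\ref{obs:optsplit}, which says that after Cut Optimization at $t$ every split has both children of the form $S_{a_l},S_{a_r}$, so the property descends to the children. A composition of safe operations is safe, so the top-down traversal is safe. (In the recursive schedule the Cut Optimizations inside $T[c_r]$ run only after $T[c_l]$ has been fully processed, but the Left-Heavy and Balancing steps inside $T[c_l]$ do not change the vertex set of any node of $T[c_r]$, so this ordering is harmless.)

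For the bottom-up traversal I would view it as doing, at each node $t$ in post-order, Left-Heavy Distribution and then Balancing, and first check that the hypotheses used by Lemmas~\ref{Lemma:safepart2} and~\ref{Lemma:safepart3} at $t$ are in force. All splits stay optimal throughout the pass, because Left-Heavy and Balancing only ever replace a split by another split from Figures~\ref{fig:optsplit} and~\ref{fig:mod}. The subtler point is that $T[c_l]$ and $T[c_r]$ must stay fully normalized even though Left-Heavy at $t$ and at ancestors of $t$ reshuffles vertices inside $T[c_l]$; but by Remarks~\ref{Rmk:RightSideEdges} and~\ref{obs:SubgraphIndSet} the only sub-prisms Left-Heavy relocates are the right-heavily split ones, which have at most two vertices and hence carry no edge, so neither full normalization nor the split type of any edge-bearing sub-prism is disturbed. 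Granting this, Lemma~\ref{Lemma:safepart2} makes each Left-Heavy step safe and leaves every split at $t$ left-heavy, which is exactly the precondition that Balancing at $t$ and Left-Heavy at the parent of $t$ need, so the chain of preconditions reaches the root.

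The real content is controlling the Balancing steps, which need not be individually safe --- Table~\ref{tab} has the two negative entries $-b'h'$ and $-c'h'$. The plan is to bound the \emph{total} change in DC-cost over all Balancing steps of the pass, not each step separately. By Property~\ref{Fact:EdgeDisjSum} and edge-disjointness of the copies of the prism, at every internal node $|G[t]|\cdot|E(G[c_l,c_r])|=\sum_{i,j}|P_i[t]|\cdot|E(P_j[c_l],P_j[c_r])|$, so this total decomposes into a sum over ordered pairs of copies of the prism, and it then suffices that each pair $(P_i,P_j)$ contributes nonnegatively. For a fixed pair the only nodes giving a negative contribution are those where one member is split $S_6\to(S_5,S_1)$ or $S_6\to(S_4,S_2)$ while the other is split $S_3\to(S_3,\emptyset)$. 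Such a split of a whole prism can occur at only one node per copy --- the node where that copy's six vertices are first split apart --- and that node is never the root (the other member could not be an $S_3$ there), so it has strict ancestors. At each strict ancestor the whole-prism member is split $S_6\to(S_6,\emptyset)$, and working up to the root, where it is finally split $S_6\to(S_3,S_3)$, the other member is at some strict ancestor split against it in one of $S_4\to(S_3,S_1)$, $S_5\to(S_3,S_2)$, $S_6\to(S_3,S_3)$, whose Table~\ref{tab} entries $8$, $13$, $24$ dominate the at most two units lost. Hence every pair contributes nonnegatively, the Balancing steps together never lower the cost, and adding the nonnegative contribution of the Left-Heavy steps (Lemma~\ref{Lemma:safepart2}) shows the bottom-up traversal is safe.

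I expect this last step to be the main obstacle. The amortization is over a tree whose shape keeps changing, so the argument ``a deficit at a low node is repaid at an ancestor'' has to be made robust against the Left-Heavy (and previously performed Balancing) operations that happen in between, and one must check that the ancestor where the whole-prism member is split against the correct partner is always reached before the root; the invariants of the second paragraph are precisely what keeps this bookkeeping consistent. Everything else is routine once the coefficient table of Lemma~\ref{Lemma:safepart3} is available.
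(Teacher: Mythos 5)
Your proposal follows essentially the same route as the paper: the paper's proof of this lemma is simply the assembly of Lemmas~\ref{Lemma:safepart1}, \ref{Lemma:safepart2} and~\ref{Lemma:safepart3}, and your precondition-checking plus the pairwise amortization with the coefficients $8$, $13$, $24$ is exactly the content of Lemma~\ref{Lemma:safepart3} restated. One small inaccuracy: a right-heavily split sub-prism need not ``have at most two vertices and hence carry no edge'' (e.g.\ $S_6\rightarrow(S_2,S_4)$ keeps three edges in its $S_4$ part); what Remark~\ref{Rmk:RightSideEdges} actually gives is that its \emph{left} part $P_i[c_l]$ is edgeless, and the preservation of full normalization under the swap is anyway established inside Lemma~\ref{Lemma:safepart2}, which you are entitled to cite.
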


\begin{proof}
Lemma \ref{Lemma:safepart1} has already established that the top-down traversal consists of a series of safe operations and is therefore itself a safe operation, i.e. the DC-cost of the HC-tree that was given as input is no higher than the DC-cost of the HC-tree after top-down traversal. 
By Lemma \ref{Lemma:safepart2} the Left-heavy Distribution on each node is also safe. By Lemma \ref{Lemma:safepart3} the combined result of all the Balancing operations together imply that the bottom-up traversal is also a safe operation, i.e. the DC-cost of the HC-tree resulting from the top-down traversal does not have DC-cost higher than the DC-cost of the HC-tree after the bottom-up traversal. 
\end{proof}


\begin{customlemma}{5}
The prism $P$ is max-well-behaved, and thus $C_6$ is min-well-behaved.
\end{customlemma}

\begin{proof}
We have demonstrated a safe normalization procedure that works for any $k$ and any HC-tree of $G=P^{(k)}$ as described by Property \ref{Fact:ExistsProc}. Safeness of the procedure follows from the safeness of the two steps, both the top-down traversal and the bottom-up traversal, as established by Lemma \ref{Lemma:allsafe}. This means that no HC-tree of $G=P^{(k)}$ has DC-cost higher than the tree output by the normalization procedure. This output tree is a factorized HC-tree{} since at its root node $r$ 
every connected subgraph $P_i[r]$ of $G[r]$ is the prism $S_6$ and every prism at $r$ is split into two $S_3$'s, which are further split into the independent sets $S_2$ and $S_1$, as in Figure \ref{fig:prismvars}. This decomposition is thus the factorized \HCtree{}, of DC-cost $48k^2$.
\end{proof}

\section{Conclusion}

We leave as an open problem the complexity of deciding if a graph is max or min well-behaved. A related question arises if we assume that we are given an  HC-tree $T$ of max DC-cost for a graph $H$ and also an integer $k$, and we ask for an HC-tree of max DC-cost for $H^{(k)}$. Note that the equivalent min DC-cost version of this problem, where adjacency denotes similarity, instead looks at the join of $k$ copies, i.e. a dense graph where an edge is added between any two vertices from distinct copies. It is not clear to us if these problems on $k$ copies are solvable in polynomial time, even though we assume an optimal HC-tree is given for a single copy.

\bibliography{Hierarchical-Clustering}
\bibliographystyle{plainurl}

\end{document}